\documentclass{article}

\usepackage{graphicx}
\usepackage{color}
\usepackage{amsmath}
\usepackage{amsthm}
\usepackage{bm}
\usepackage{amssymb}
\usepackage{booktabs}
\usepackage{graphicx}
\usepackage{algpseudocode}
\usepackage{booktabs}
\usepackage{multirow}
\usepackage{graphicx}
\usepackage{hyperref}
\usepackage{xcolor}
\usepackage{threeparttable}
\usepackage[ruled,norelsize,vlined,linesnumbered]{algorithm2e}
\usepackage{tikz}
\usepackage{filecontents}
\usepackage{multicol}
\usepackage{tcolorbox}
\usepackage{url}
\usepackage{subcaption}
\usepackage{tabularx}
\usepackage{makecell}
\usepackage{ragged2e}
\usepackage{makecell}
\usepackage{array}
\usepackage{enumitem}
\usepackage{footmisc}
\newtheorem{definition} {Definition}

\newtheorem{theorem}    {Theorem}

\usepackage{geometry}
\begin{document}
\title{Structural Leakage in Graph Encryption: Attacks and Defenses}

\author{Hua Shen$^1$(nancy78733@126.com) \quad Renzhi Chen$^1$(13914373958@163.com)\\
        Ge Wu$^2$(gewu@seu.edu.cn) \quad Willy Susilo$^3$(wsusilo@uow.edu.au) \\
        Jing Chen$^4$(chenjing@whu.edu.cn) \quad Mingwu Zhang$^1$(csmwzhang@gmail.com) \\
$^1$ Hubei University of Technology \quad $^2$Southeast University \quad $^3$University of Wollongong \\ $^4$Wuhan University
}

\date{}
\maketitle

\begin{abstract}
Graph encryption schemes (GES) enable secure outsourcing of graph data while supporting efficient queries. This report provides a comprehensive analysis of structural leakage in GES for single-pair shortest path (SPSP) queries, integrating findings from two recent works. First, we analyze PathGES, a scheme designed to resist query recovery attacks through heavy-light decomposition (HLD) and canonical fragment encoding. Our analysis reveals that PathGES suffers from significant imbalances in HLD decomposition, with over 99\% of token-path mappings being one-to-one on real-world datasets, enabling both the Falzon-Paterson attack and side-channel inference of path lengths. Second, we present Fragment Tree attack that exploits these structural weaknesses to recover query contents, achieving up to 10.24\% exact recovery on sparse graphs. Third, we introduce BlindGES, an enhanced scheme incorporating a Merge-and-Divide mechanism and two-level multimap index that reduces one-to-one mappings to below 20\%, cuts setup time by 50\%, reduces storage overhead by 32\%, and limits path length leakage to under 1\%. This report systematically presents attack methodologies, defense mechanisms, security proofs, and experimental evaluations on seven real-world datasets.
\end{abstract}
\noindent \textbf{Keywords:} Graph Encryption, Structure Leakage, Shortest Path Query, Query Recovery Attack, Heavy-Light Decomposition

\vspace{1em}

\newpage
\tableofcontents
\newpage

\section{Introduction}
\subsection{Backgroud}
Graphs can intuitively represent relationships among multiple entities and are widely used in social networks \cite{nettleton2013data, lai2019graphse2, wang2022pegraph}, geographic information systems (GIS) \cite{wu2016privacy, zhu2022application}, financial transactions \cite{henderson2020using}, and communication networks \cite{jiang2022graph, peng2023efficiently}. The rapid growth of graph scale has driven enterprises and users to increasingly outsource graph datasets to cloud servers \cite{selvaraj2023outsourced, wang2023prigsim}. However, graph outsourcing, while operationally advantageous, introduces privacy vulnerabilities. Attacks on compromised infrastructure or untrusted server behaviors could lead to unauthorized disclosure of clients' sensitive information, such as social connectivity patterns, geolocation trajectories, and financial transaction histories.

To address this issue, graphs are typically encrypted before being outsourced to cloud servers \cite{ghosh2021efficient, wang2017secgdb, wang2024cryptgraph}. Graph Encryption scheme (GES) is a specialized form of structured encryption (STE) \cite{meng2015grecs, shen2018cloud, hu2024pruned, ghosh2021efficient, falzon2024PathGES}. STE \cite{chase2010structured} can realize efficient and private server-side queries and belongs to searchable symmetric encryption (SSE) \cite{poh2012structured}. GES first needs to design a specific structure (such as SP-matrix \cite{ghosh2021efficient}, $2$-hop labeling \cite{liu2017graph}, bi-directional index \cite{li2021forward}, Multimap chaining \cite{chase2010structured, poh2012structured, falzon2024PathGES}) to redescribe a graph or to represent the calculation results of a graph. Then, GES encrypts the structure while enabling efficient queries on the encrypted structure. The queries include adjacency relationship queries \cite{yang2024querying, song2024enabling}, graph pattern matching queries \cite{xu2023framework, cao2011privacy}, subgraph matching queries \cite{wang2022oblivgm, guan2023efficient, zuo2022privacy}, shortest distance queries \cite{meng2015grecs, shen2018cloud}, and shortest path queries \cite{xie2016practical, wang2024cryptgraph, ghosh2021efficient, falzon2024PathGES}.

Shortest path queries are fundamental in areas such as navigation, logistics, and network optimization.

\subsection{Evolution of Graph Encryption Schemes}
In 2021, Ghosh, Kamara, and Tamara \cite{ghosh2021efficient} introduced the first graph encryption scheme (GES) supporting single-destination shortest path (SDSP) queries. Their approach constructs a dictionary (SPDX) representing the shortest-path matrix of a graph. A critical limitation is the strict one-to-one mapping between tokens and paths.

In 2022, Falzon and Paterson \cite{falzon2022efficient} exploited this mapping to recover client queries, demonstrating that an adversary with knowledge of the plaintext graph and access to query leakage information can recover the queried node pairs.

In response, in 2024, Falzon et al. Falzon et al. \cite{falzon2024PathGES} proposed PathGES, which associates each token with a canonical segment—a path segment whose length is a power of two. For exponents $\geq$ $2$, one token may correspond to multiple paths, thereby reducing the risk of query recovery. The concept of query recovery was first introduced in \cite{mouratidis2012shortest}, with the core goal of recovering the plaintext queries issued by the client through the exploitation of leakage information generated during the query process.

However, our analysis reveals that PathGES suffers from significant imbalances in the number of heavy and light edges in its HLD decomposition, as well as structural flaws in its token-path mapping architecture. These weaknesses result in a substantial number of one-to-one mappings, making it ineffective against the Falzon-Paterson attack and vulnerable to shortest path length leakage.

\subsection{Contributions of This Report}
This report integrates two complementary works: BreakingGES (attack) exposes critical vulnerabilities in PathGES and develops an attack to exploit them; BlindGES (defense) proposes a defense scheme to address these vulnerabilities.

\textbf{Fragment Tree Attack (BreakingGES)}:
\begin{itemize}
  \item In-depth analysis of PathGES design, revealing that over 99\% of token-path mappings remain one-to-one on real-world datasets
  \item Introduction of Fragment Tree structure for managing path fragments and capturing leakage information
  \item Development of query tree generation algorithm that captures logical relationships between tokens
  \item Rigorous proof of isomorphism between fragment trees and query trees
  \item Experimental validation on seven datasets demonstrating effective query recovery, with up to 10.24\% of queries exactly recovered and an additional 10\% narrowed to 2-5 candidates
\end{itemize}

\textbf{BlindGES Defense (BlindGES)}:
\begin{itemize}
  \item Discovery that at least 84\% of shortest path lengths can be inferred via side-channel attacks on PathGES
  \item Proposal of Merge-and-Divide mechanism to address the mismatch between HLD (a classic tree decomposition algorithm) and its canonical segment encoding, which together cause over 99\% one-to-one token-path mappings on real-world graphs
  \item Design of novel two-level multimap index structure for secure shortest path queries
  \item Experimental demonstration showing that, compared to PathGES, BlindGES reduces one-to-one mappings from over 99\% to below 20\%, limits path length leakage to under 1\%, and cuts setup time by 50\% and storage overhead by 32\%
\end{itemize}

\subsection{Report Organization}
Section 2 presents preliminaries. Section 3 reviews GKT and the Falzon-Paterson attack. Section 4 provides in-depth analysis of PathGES. Section 5 presents the Fragment Tree attack methodology. Section 6 introduces BlindGES defense scheme. Section 7 presents experimental evaluations. Section 8 discusses findings. Section 9 concludes.

\section{Preliminaries}
We use $[n]$ to denote $\{1$, $2$, $\cdots$, $n\}$, $p_1$ $\parallel$ $p_2$ to represent the concatenation of path fragments $p_1$ and $p_2$, and $\lambda$ to denote the security parameter.

\subsection{Data Structures}\label{datastructure}
\textbf{Graph.} A graph $G$ $=$ ($V$, $E$) consists of a non-empty finite vertex set $V$ and a finite edge set $E$. $|V|$ indicates the number of vertices in $G$, and $|E|$ represents the number of edges.

\textbf{Single-Destination Shortest Path Tree (SDSP-Tree)} \cite{frigioni1998semidynamic}. For a given graph $G$, an SDSP-tree records the information of the shortest paths from the other vertices in the graph to the root vertex (the destination). After using a shortest path algorithm (such as the Floyd-Warshall \cite{weisstein2008floyd} algorithm or Dijkstra's algorithm) to find out the shortest paths between all pairs of vertices, we can generate $|V|$ SDSP-trees of $G$. For any $r$ $\in$ $V$, we use $T_r$ to represent the SDSP-tree rooted at $r$.

\textbf{Heavy-Light Decomposition (HLD)} \cite{sleator1981data}. Heavy-light decomposition is a method for decomposing a tree into disjoint subpaths using heavy and light edges in a breadth-first search (BFS) manner. Given a rooted tree $T_r$, the size $size_{T_r}(v)$ of a node $v$ is the number of nodes in the subtree rooted at the node $v$. We use $parent(v)$ to denote the parent node of $v$. If $size_{T_r}(v)$ $>$ $\frac{1}{2} size_{T_r}(parent(v))$, the edge ($v$, $parent(v)$) is a heavy edge, otherwise it is a light edge. For any node in $T_r$, there is at most one heavy edge among the edges connecting it with its children. The input of HLD algorithm is a rooted tree $T_r$ and its output is a disjoint path set $Pset_r$. Fig. \ref{HLD} illustrates the processing procedure of this algorithm through an example.

\textbf{Multimap}. A multimap $M$ is a data structure that is an extension of a dictionary. A traditional dictionary maps each label to a single value; the relationship between label and value is one-to-one. A multimap maps each label $lab$ to a set of values $valSet$; the relationship between label and value is one-to-many. Each label is unique. The multimap structure supports an insert operation that adds a pair ($lab$, $valSet$) into $M$ and a visit operation that, given a label $lab$, returns the associated $valSet$. If $M$ does not hold the search label $lab$, $M[lab]$ returns $\bot$. We denote by $|M|$ the size of $M$ (the number of ($lab$, $valSet$) pairs) and by $|M[lab]|$ the size of $valSet$.

\textbf{Query Tree.} The query tree $Q$ $=$ ($TK$, $E'$) is a tree constructed by the server after collecting all leaked tokens. $TK$ is the set of all leaked tokens in the queries, with each token corresponding to a node of the query tree. The edges of $E'$ represent the connection relationships between tokens. Suppose there exists two mappings $tk_1$ $\mapsto$ $p_1$ and $tk_2$ $\mapsto$ $p_2$, where $p_1$ and $p_2$ are two paths. If these two paths can be linked into a longer path, (i.e., $p_1$ $\parallel$ $p_2$), then there exists a directed edge $(tk_1, tk_2)$ $\in$ $E'$ in the query tree to represent this relationship.

\subsection{Graph Encryption Scheme}
A graph encryption scheme (GES) includes five algorithms \cite{falzon2024PathGES}:

\vspace{-0.5\topsep}
\begin{itemize}
  \setlength{\itemsep}{0pt}
  \setlength{\parsep}{0pt}
  \setlength{\parskip}{0pt}
  \item {\texttt{GES.KeyGen}}: Its input is a security parameter $\lambda$, and its output is a secret key $K$.
  \item {\texttt{GES.Encrypt}}: Its inputs include $K$ and a graph $G$; its output is an encrypted database $ED$.
  \item {\texttt{GES.Token}}: It takes $K$ and a query $q$ and returns the corresponding token $tk$.
  \item {\texttt{GES.Search}}: It uses $tk$ to search $ED$ and returns a response $resp$.
  \item {\texttt{GES.Reveal}}: It takes $resp$ and $K$ and outputs the plaintext of the query result.
\end{itemize}

Except for \texttt{GES.Search}, the client executes all the other algorithms, and the server performs the algorithm \texttt{GES.Search}. GESs realize queries on $ED$ in sublinear time by leaking some information. Describe the information leaked during constructing $ED$ by defining the leakage function $\mathcal{L}_S$, and characterize the information revealed during retrieving $ED$ with the leakage function $\mathcal{L}_Q$. In other words, $\mathcal{L}_S$ and $\mathcal{L}_Q$ define the maximum amount of information that GESs can allow to be leaked. Like \cite{falzon2024PathGES}, we define and prove the security of BlindGES using the real-ideal paradigm concerning a server that learns the outputs of $\mathcal{L}_S$ and $\mathcal{L}_Q$, knows (or even chooses) the graph $G$, and tries to infer the plaintext queries.

\begin{definition}[Adaptive ($\mathcal{L}_S$, $\mathcal{L}_Q$)-secure GES \cite{chase2010structured, ghosh2021efficient, falzon2024PathGES}] \label{Def_1} Let $\mathrm{GES}$ be a graph encryption scheme, $\mathcal{L}_S$ and $\mathcal{L}_Q$ be leakage algorithms. We say that $\mathrm{GES}$ is ($\mathcal{L}_S$, $\mathcal{L}_Q$)-secure against adaptive chosen-query attacks if for any PPT adversary $\mathcal{A}$ and for all $\lambda$ $\geq$ $1$, there exists a PPT $\mathcal{S}$ and a negligible function $negl(\lambda)$ such that
    \begin{equation}		|Pr[\mathrm{\mathbf{Real}}^{\mathrm{GES}}_{\mathcal{A}}(1^{\lambda})=1]-Pr[\mathrm{\mathbf{Ideal}}^{\mathrm{GES}}_{\mathcal{A},\mathcal{S} im}(1^{\lambda})=1]| \leq negl(\lambda). \nonumber
	\end{equation}
and the definitions of games $\mathrm{\mathbf{Real}}^{\mathrm{GES}}_{\mathcal{A}}(1^{\lambda})$ and $\mathrm{\mathbf{Ideal}}^{\mathrm{GES}}_{\mathcal{A},\mathcal{S} im}(1^{\lambda})$ are shown in Figure \ref{Games}.
\end{definition}

\begin{figure}
  \centering
  \begin{minipage}{0.5\textwidth}
      \centering
      \includegraphics[width=\linewidth]{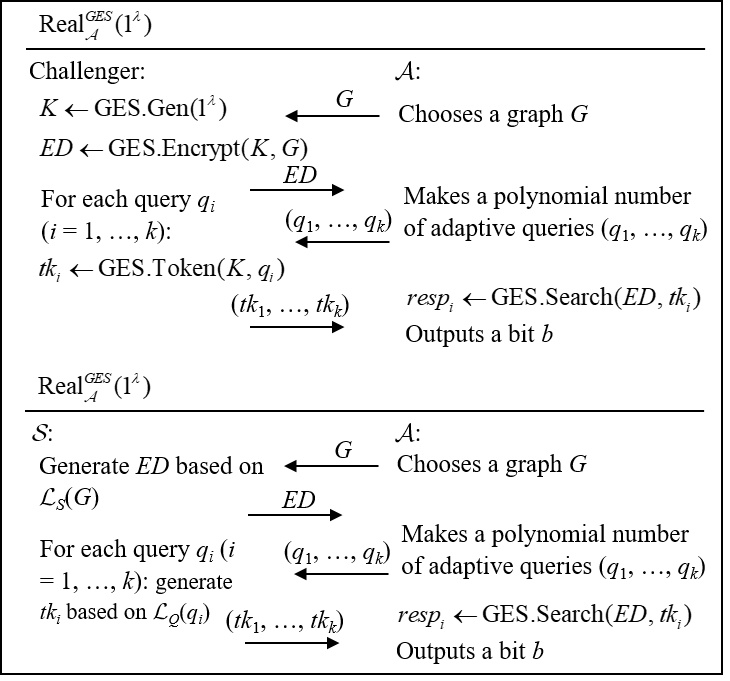}
  \end{minipage}
  \caption{Games $\mathrm{\mathbf{Real}}^{\mathrm{GES}}_{\mathcal{A}}(1^{\lambda})$ and $\mathrm{\mathbf{Ideal}}^{\mathrm{GES}}_{\mathcal{A},\mathcal{S} im}(1^{\lambda})$}
  \label{Games}
\end{figure}

\subsection{Multimap Encryption Scheme}
Multimap encryption schemes fall into response-hiding encryption (EMM-RH) and response-revealed encryption (EMM-RR). An EMM-RH scheme is a tuple of five algorithms \texttt{EMM-RH} $=$ (\texttt{KeyGen}, \texttt{Encrypt}, \texttt{Token}, \texttt{Get}, \texttt{Reveal}), and an EMM-RR is a tuple of four algorithms \texttt{EMM-RR} $=$ (\texttt{KeyGen}, \texttt{Encrypt}, \texttt{Token}, \texttt{Get}):

\begin{itemize}
  \setlength{\itemsep}{0pt}
  \setlength{\parsep}{0pt}
  \setlength{\parskip}{0pt}
  \item {\texttt{EMM-RH.KeyGen}}: It takes as input a security parameter $\lambda$ and outputs a secret key $K$.
  \item {\texttt{EMM-RH.Encrypt}}: It takes as input a key $K$ and a multi-mapping $M$ and outputs an encrypted multimap $EM$.
  \item {\texttt{EMM-RH.Token}}: It takes as input a key $K$ and a label $lab$ and outputs a token $tk$.
  \item {\texttt{EMM-RH.Get}}: It takes as input a token $tk$ and an encrypted multimap $EM$. It outputs a response $resp$.
  \item {\texttt{EMM-RH.Reveal}}: It takes as input a response $resp$ and outputs a set of plaint values $ValSet$.
\end{itemize}

Within the five algorithmic tuples described above, only the \texttt{EMM-RH.Get} algorithm is executed by the server, whereas all remaining algorithms are executed by the client. In contrast to the EMM-RH algorithm suite, the EMM-RR suite omits the reveal algorithm because \texttt{EMM-RR.Get} directly returns $ValSet$. Therefore, the EMM-RR algorithm suite directly exposes the query results to the server.

Like \cite{falzon2024PathGES}, we assume \texttt{EMM-RR} to be ($\mathcal{L}^{EMM-RR}_S$, $\mathcal{L}^{EMM-RR}_Q$)-secure and \texttt{EMM-RH} to be ($\mathcal{L}^{EMM-RH}_S$, $\mathcal{L}^{EMM-RH}_Q$)-secure. The setup leakages $\mathcal{L}^{EMM-RR}_S(M)$ and $\mathcal{L}^{EMM-RH}_S(M)$ are $|M|$ and $\sum_{i=1}^{|M|}|ValSet_i|$. The query leakage $\mathcal{L}^{EMM-RR}_Q$($M$, $lab_1$, $\cdots$, $lab_k$) is ($QP^{EMM-RR}$, $AP^{EMM-RR}$), and the query leakage $\mathcal{L}^{EMM-RH}_Q$($M$, $lab_1$, $\cdots$, $lab_k$) is
($QP^{EMM-RH}$, $Vol^{EMM-RH}$). $QP^{EMM-RR}$($M$, $lab_1$, $\cdots$, $lab_k)$ and $QP^{EMM-RH}$($M$, $lab_1$, $\cdots$, $lab_k)$ are the query pattern that reveals whether two queries are equal. The pattern is a $k$ $\times$ $k$ matrix of bits. If $lab_i$ $=$ $lab_j$, a $1$ is set at location ($i$, $j$) of the matrix. $AP^{EMM-RR}$($M$, $lab_1$, $\cdots$, $lab_k$) $=$ $\{M[lab_1]$, $\cdots$, $M[lab_k]\}$ is the access pattern that reveals the individual values returned for each query. $Vol^{EMM-RH}$($M$, $lab_1$, $\cdots$, $lab_k$) $=$ $\sum_{i=1}^{k}|M[lab_i]|$ is the volume pattern that reveals the number of the values returned for each query.

\subsection{Tree Isomorphism and Canonical Names}\label{TreeIsoCanNames}
\textbf{Graph Isomorphism} Let $G_1$ $=$ ($V_1$, $E_1$) and $G_2$ $=$ ($V_2$, $E_2$) be two graphs. If there exists a bijection $\phi$: $V_1$ $\to$ $V_2$ such that for any $u$, $v$ $\in$ $V_1$, we have $\phi(u)$, $\phi(v)$ $\in$ $V_1$, and if ($u$, $v$) $\in$ $E_1$, we have $(u, v)$ $\in$ $E_1$ $\iff$ ($\phi(u)$, $\phi(v)$) $\in$ $E_2$, then $G_1$ and $G_2$ are \emph{isomorphic}, denoted $G_1$ $\cong$ $G_2$. In this case, $\phi$ is called an \emph{isomorphism} between the two graphs. Let $T_{r_1}$ $=$ ($V_{r_1}$, $E_{r_1}$) and $T_{r_2}$ $=$ ($V_{r_2}$, $E_{r_2}$) be rooted trees. An \emph{isomorphism} between $T_{r_1}$ and $T_{r_2}$ is a graph isomorphism $\varphi$: $V_{r_1}$ $\to$ $V_{r_2}$ such that $\varphi(r_1)$ $=$ $r_2$.

\textbf{Canonical Name.} The Falzon--Paterson attack uses canonical names to encode each node in a tree, reflecting the structure of its subtree. Canonical names are built recursively in a bottom-up manner: the canonical name of a leaf node is designated as the fixed string {\tt "10"}. For an internal node, the first step is to sort the canonical names of all its child nodes in ascending order based on their length. These sorted names are then concatenated with the delimiters {\tt "1"} and {\tt "0"} added at the beginning and end, respectively. The formal method for calculating a canonical name is $\mbox{Name}(v)$ $=$ $\verb+"+1$ $||$ $\mbox{children\_names}$ $||$ $0\verb+"+$. This naming method depends solely on the subtree structure, ensuring that structurally identical subtrees yield identical canonical names.

\textbf{PathName.} In a rooted tree $T_r$ with $r$ as the root, a path name associated with a node $v$ (denoted as $PathName(v)$) encodes the path structure from $v$ to the root node $r$. To construct it, we first calculate the canonical name of each node. Then, we perform a depth-first traversal from the root node, using a stack to determine the path names for all nodes. For the root $r$, $\mbox{PathName}(r)$ $=$ $h(\mbox{Name}(r))$; for a non-root node $v$, suppose its parent is $u$, $\mbox{PathName}(v)$ $=$ $h(\mbox{Name}(v)$ $||$ $\mbox{PathName}(u))$. Here, $h(\cdot)$ denotes a hash function (e.g., truncated SHA-256) used to compress the length of the path name to $\mathsf{O}(\log n)$. This construction ensures that $\mbox{PathName}(v)$ encodes both a local structure of $T_r$ and the information of the path from $v$ to $r$. Therefore, such path names are suitable for structural matching and encrypted query recovery tasks.

\section{GKT Scheme Review and Falzon-Paterson Attack}
\subsection{GKT Scheme Overview}
In 2021, Ghosh, Kamara, and Tamara \cite{ghosh2021efficient} introduced the first graph encryption scheme (GES) supporting single-destination shortest path (SDSP) queries. Their approach constructs a dictionary (SPDX) representing the shortest-path matrix (SP-matrix) of a graph, where each entry $\mathrm{SPDX}[(u, v)] = (w, v)$ indicates that the next hop from $u$ to $v$ is $w$. To enable recursive traversal on encrypted data, the structure is modified to $\mathrm{SPDX^{\prime}}$, where each value becomes a tuple $(tk_{(w, v)}, ct_{(w, v)})$ containing a token and a ciphertext.

A distinctive feature of GKT is that the encrypted structure enjoys the same asymptotic setup time, query time, and space as the plaintext SP matrix.

\subsection{Core Vulnerability of GKT}
A critical limitation is the strict one-to-one mapping between tokens and paths. Suppose the query request is ($u$, $v$), and the shortest path from $u$ to $v$ is ($u$, $w_0$, $w_1$, $w_2$, $w_3$, $v$). The client sends the token $tk_{(u,v)}$ corresponding to ($u$, $v$) to the server. The server retrieves the encrypted SP-matrix and will successively obtain $tk_{(w_0, v)}$, $tk_{(w_1, v)}$, $tk_{(w_2, v)}$, $tk_{(w_3, v)}$, and $tk_{(v, v)}$. As a result, the server can observe all the tokens involved in a query process as well as the relationships between these tokens.

\subsection{Falzon-Paterson Attack}
Falzon and Paterson \cite{falzon2022efficient} developed a query recovery attack that exploits the one-to-one mapping between tokens and paths in GKT. The attack consists of four phases: precomputation, query observation and query tree construction, mapping dictionary building, and query recovery.

\subsubsection{Precomputation (Attack Preparation)}
Before observing any client queries, the server precomputes necessary information based on the plaintext graph $G$:
\begin{itemize}
  \item \textbf{Build SDSP trees}: For each vertex $r \in V$, construct the SDSP tree $T_r$ rooted at $r$, representing the shortest paths from all other vertices to $r$.
  \item \textbf{Compute PathNames}: For each node in every SDSP tree $T_r$, calculate its PathName. As defined in Section \ref{TreeIsoCanNames}, PathName encodes the path structure from the node to the root, computed recursively using canonical names.
  \item \textbf{Construct mapping table $M$}: Build a mapping table $M$ indexed by PathNames.  For each node $u$ in $T_r$ with PathName $pn$, the entry $M[pn]$ stores the candidate query pair ($u$, $r$). Queries corresponding to nodes with the same PathName are grouped together in the same row of $M$.
\end{itemize}
\noindent This precomputation is performed once before any attack execution.

\subsubsection{Query Observation and Query Tree Construction}
After responding to a sufficient number of query requests from the client, the server constructs a query tree that represents the query structure observed during the query process.

For a fixed destination vertex $v$, the server constructs query tree $Q_v$ as follows:
\begin{itemize}
  \item The root of $Q_v$ is the token $tk_{(v, v)}$
  \item If a token $tk_{(w, v)}$ is observed to be followed by $tk_{(v, v)}$ in a query sequence, a directed edge is added from $tk_{(w, v)}$ to the root
  \item Recursively, if $tk_{(u, v)}$ is followed by $tk_{(w, v)}$, a directed edge is added from $tk_{(u, v)}$ to $tk_{(w, v)}$
\end{itemize}

\subsubsection{Building Mapping Dictionary}
After the query tree is built, the server calculates the path name of each node in the query tree. It records the node value (i.e., the token) and its corresponding PathName in a mapping dictionary $D$, where the token serves as the key of the dictionary and the PathName as the value (i.e., $D:$ token $\rightarrow$ PathName).

\subsubsection{Query Recovery}
The server uses both $M$ (from precomputation) and $D$ (from query observation) to launch the query recovery attack:
\begin{itemize}
  \item For a target query token $tk$, retrieve its PathName from $D: pn = D[tk]$
  \item Use this PathName to query the mapping table $M$, obtaining the set of candidate query pairs: $M[pn]$
  \item The candidate set contains all possible ($u$, $v$) pairs whose corresponding node in the SDSP tree shares the same PathName
\end{itemize}

Fig. \ref{GKT-attack} illustrates an example of the Falzon-Paterson attack. The strings marked in blue in the figure are the canonical names corresponding to the nodes. When the server receives the query token $tk_{(5,1)}$, it retrieves $D$ to obtain the corresponding path name $h(10 || PathName(4))$. Then, the server retrieves $M$ based on $h(10 || PathName(4))$ to get the candidate query (5,1) that matches the query token $tk_{(5,1)}$. Since there is only one candidate query, the server can determine that the query corresponding to $tk_{(5,1)}$ is this candidate query. When the server receives the query token $tk_{(8,1)}$, it retrieves $D$ to obtain the corresponding path name $h(10 || PathName(1))$. Subsequently, the server retrieves $M$ according to $h(10 || PathName(1))$ and finds that the candidate queries matching the query token $tk_{(8,1)}$ are ($8$, $1$) and ($7$, $1$), narrowing the target range to a small candidate set. From the above analysis, it can be found that the Falzon-Paterson attack does not break the underlying encryption primitives; rather, it solely leverages the structural leakage inherent in GKT to recover query pairs or precisely reduce them to a minimal candidate set.

\begin{figure}[h]
  \centering
  \begin{minipage}{0.7\textwidth}
      \centering
      \includegraphics[width=\linewidth]{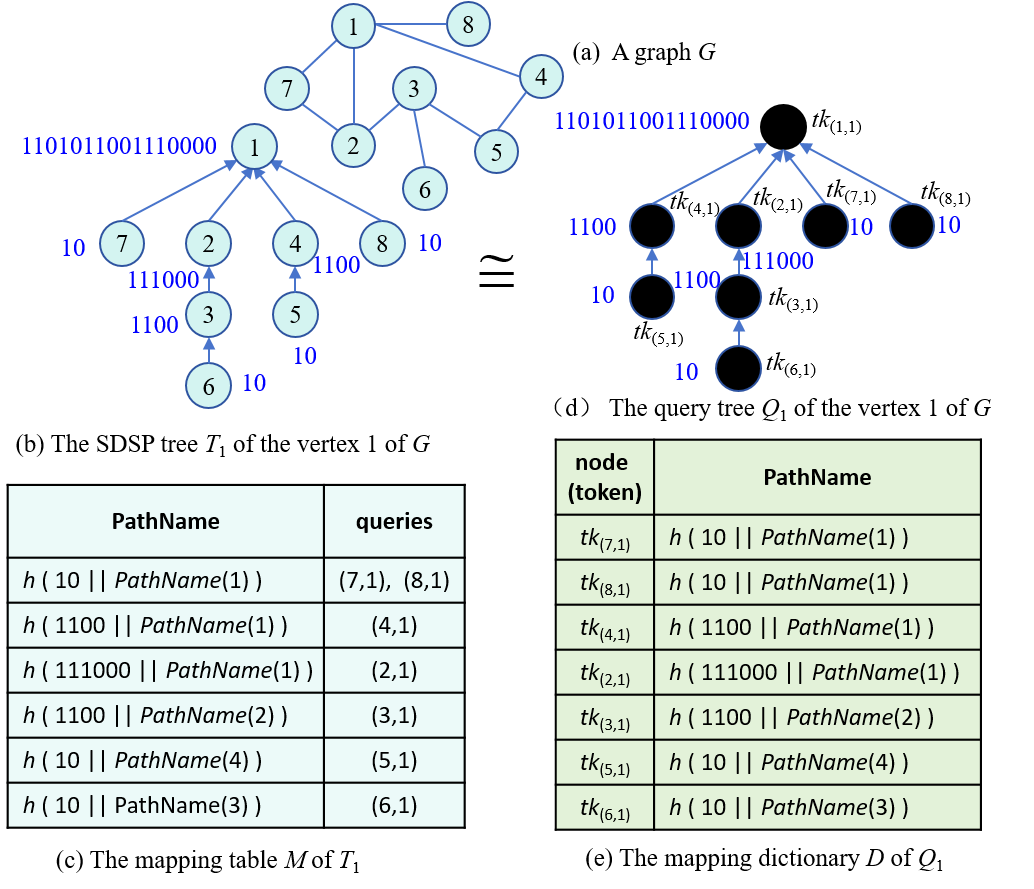}
  \end{minipage}
  \caption{An example of the Falzon-Paterson attack.}
  \label{GKT-attack}
\end{figure}

\textbf{Attack Effectiveness}. According to \cite{falzon2022efficient}, the Falzon-Paterson attack can directly recover up to 21.9\% of query plaintexts in GKT, and for approximately 50\% of the queries, the candidate set can be reduced to no more than three options. These results are based on experiments conducted on the p2p-Gnutella08 and p2p-Gnutella04 datasets.

\section{In-depth Analysis of PathGES}
\subsection{Overview of PathGES}
PathGES \cite{falzon2024PathGES} can be viewed as an enhanced version of GKT\cite{ghosh2021efficient} designed to resist the Falzon-Paterson attack \cite{falzon2022efficient}. In GKT, there exists a strict one-to-one mapping between tokens and paths, allowing the server to observe all tokens and their relationships during the query process. The Falzon-Paterson attack exploits this one-to-one mapping to recover query contents.

The core problem that PathGES needs to solve is: design a mechanism that breaks the unique binding between tokens and paths while still enabling correct recovery of shortest paths. To achieve this goal, PathGES employs the following core techniques:
\begin{itemize}
  \item Decomposing shortest paths using the Heavy-Light Decomposition (HLD) algorithm
  \item Normalizing path fragments into canonical segments (lengths that are powers of two)
  \item Designing a labeling method for normalized fragments
  \item Constructing a two-level multimap index structure
\end{itemize}

\subsubsection{Overall Architecture}
For a given graph $G$, PathGES first needs to construct the SDSP forest $\{T_r\}_{r \in V}$ and decomposes each SDSP tree $T_r$ into disjoint paths $PSet_r$ using HLD. After that, PathGES encodes each path in $PSet_r$ and converts it into a set of canonical segments. Following this, PathGES generates a token for each canonical segment and constructs a two-level index using two multimaps, $M_1$ and $M_2$:
\begin{itemize}
  \item $M_2$ is used to retrieve canonical segments
  \item $M_1$ is used to retrieve tokens that search $M_2$
\end{itemize}

PathGES uses the tokens corresponding to SDSP queries to retrieve $M_1$. Finally, PathGES encrypts $M_1$ and $M_2$ and stores them on the server.

\subsubsection{HLD-based Path Decomposition}
For relevant definitions of HLD, please refer to Section \ref{datastructure}. Given a graph $G$, PathGES first needs to construct an SDSP tree for each vertex in the graph. Suppose the SDSP tree $T_v$ corresponding to vertex $T_v$ is as shown in Fig. \ref{HLD}. Fig. \ref{HLD} illustrates the process of fragmenting the shortest paths represented by $T_v$ using the HLD algorithm, resulting in a set of path fragments denoted as $PSet_v$. The path fragments in $PSet_v$ are mutually disjoint. First, PathGES calculates the size of the subtree for each node in $T_v$, then distinguishes between heavy edges and light edges within $T_v$ and performs fragmentation processing on the paths to obtain the corresponding set of fragmented paths $PSet_v$. If the currently visited node is connected to its child via a light edge, this light edge is pushed into a queue; if the currently visited node is connected to its child via a heavy edge, the current path fragment continues to grow downward along this heavy edge. Otherwise, the growth stops, and the path fragment is merged into $PSet_v$. At this point, it is determined whether the queue is empty. If it is empty, the fragmentation process ends; otherwise, an edge is removed from the queue, and the above process is repeated with the next node in the edge as the currently visited node.

\begin{figure*}[!htbp]
  \centering
  \begin{minipage}{1.0\textwidth}
      \centering
      \includegraphics[width=1.0\linewidth]{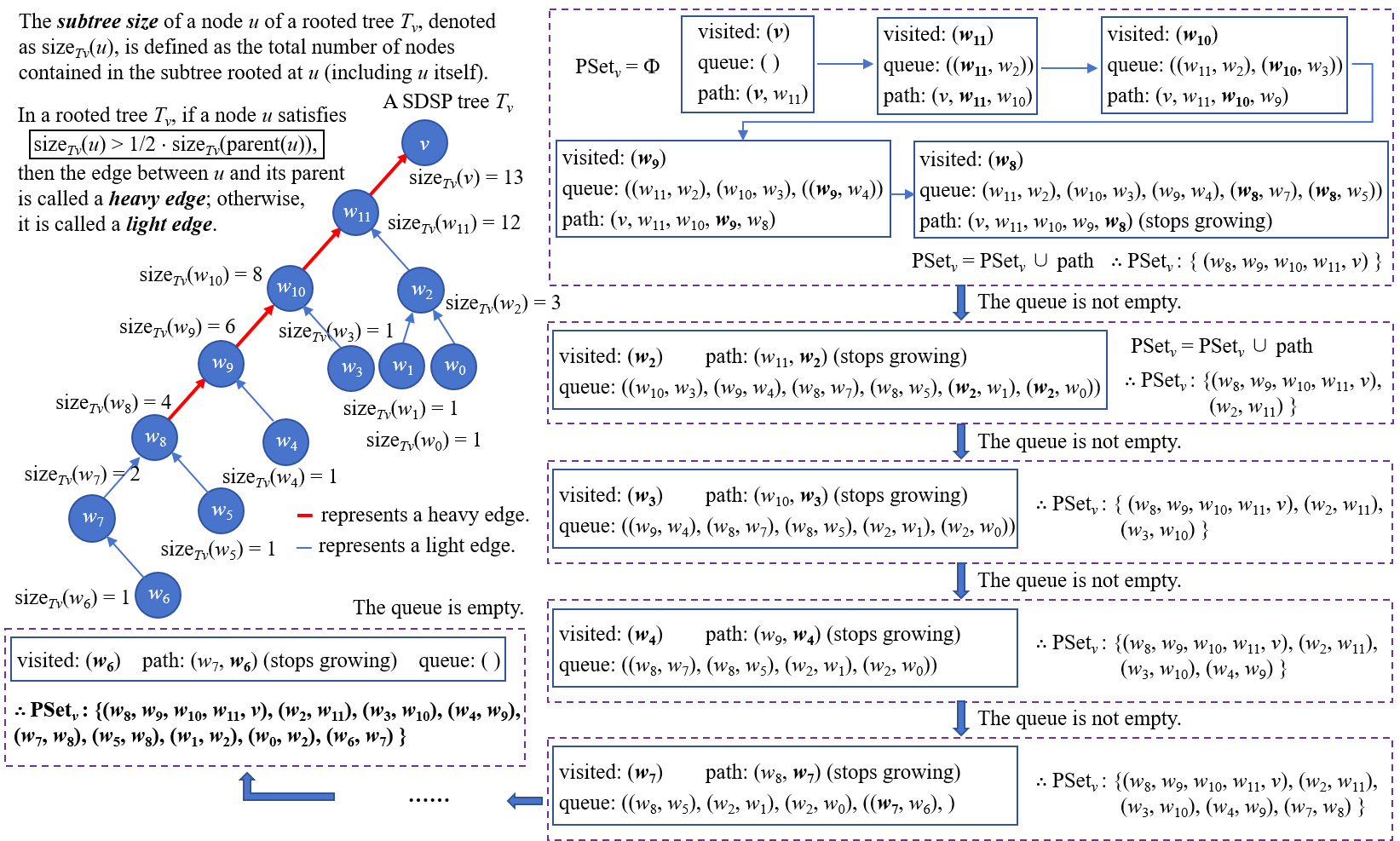}
      \caption{An example of the HLD algorithm.}
      \label{HLD}
    \end{minipage}
\end{figure*}

\subsubsection{Canonical Segment Encoding and Label Assignment}\label{CanSegEncoding}
After obtaining the disjoint path set $PSet_v$ through HLD decomposition, PathGES normalizes each path fragment by padding it with dummy nodes to make its length a power of two. A path fragment whose length is a power of two is called a \emph{canonical fragment}.

\textbf{Example.} suppose ($u'$, $w'_4$, $w'_3$, $w'_2$, $w'_1$, $v'$) is a disjoint path of $T_v$, where $v'$ is the node among these nodes that is closest to the root $v$ of $T_v$. In this path fragment, the edge ($w'_1$, $v'$) may be a light or heavy edge, and others are heavy. This path is encoded into 4 canonical segments: ($w'_1$, $v$), ($w'_2$, $w'_1$, $v$), ($w'_4$, $w'_3$, $w'_2$, $w'_1$, $v$), and ($p$, $p$, $u$, $w'_4$, $w'_3$, $w'_2$, $w'_1$, $v$), where `$p$' is a padding node. These segments are stored as the values of the label/value pairs in $M_2$, and the label determines the storage location. The labels of the four fragments are: ($r$, $u$, $v$, $0$), ($r$, $u$, $v$, $1$), ($r$, $u$, $v$, $2$), ($r$, $u$, $v$, $3$), respectively. Their tokens are $tk_{(r, u, v, 0)}$, $tk_{(r, u, v, 1)}$, $tk_{(r, u, v, 2)}$, and $tk_{(r, u, v, 3)}$.

\textbf{One-to-Many Mapping.} The key innovation of PathGES is that a token can correspond to multiple paths:
\begin{itemize}
  \item $tk_{(r, u, v, 0)}$ corresponds to ($w'_1$, $v$)
  \item $tk_{(r, u, v, 1)}$ corresponds to ($w'_2$, $w'_1$, $v$)
  \item $tk_{(r, u, v, 2)}$ corresponds to two paths: ($w'_3$, $w'_2$, $w'_1$, $v$) and ($w'_4$, $w'_3$, $w'_2$, $w'_1$, $v$)
  \item $tk_{(r, u, v, 3)}$ corresponds to ($u$, $w'_4$, $w'_3$, $w'_2$, $w'_1$, $v$)
\end{itemize}

\noindent Fig. \ref{Normalizing} illustrates the process of normalizing and assigning labels to this path fragment. Generally, the label $(r, u, v, j)$ represents a canonical fragment in the rooted tree with root $r$, where the fragment has a length of $2^j$, starts at $u$, and ends at $v$. $tk_{(r, u, v, j)}$ can correspond to at most $2^j - 2^{j-1}$ paths. The relationship between tokens and paths is no longer necessarily one-to-one, which is intended to help PathGES resist the Falzon-Paterson attack.

\begin{figure}[h]
  \centering
  \begin{minipage}{0.7\textwidth}
      \centering
      \includegraphics[width=1.0\linewidth]{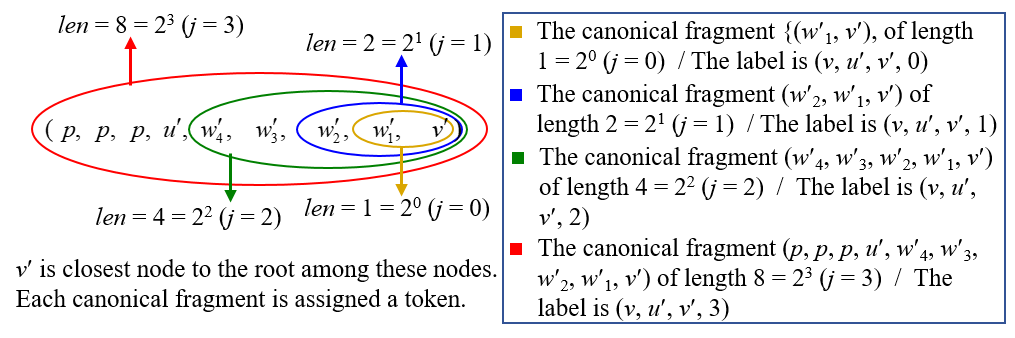}
  \end{minipage}
  \caption{An example of normalizing and assigning labels.}
  \label{Normalizing}
\end{figure}

\subsubsection{Two-Level Index Structure}
PathGES constructs a two-level index structure consisting of two mapping tables, $M_1$ and $M_2$:

\textbf{$M_2$ (Second Level).} The canonical segments are stored in $M_2$. The token for retrieving the canonical fragment with the label $(r, u, v, j)$ from $M_2$ is $M_2$. Note that $M_2$ achieves a one-to-many mapping from tokens corresponding to canonical fragments to paths (see Section \ref{CanSegEncoding}). A token $tk_{(r, u, v, j)}$ can correspond to multiple paths.

\textbf{$M_1$ (First Level).} For the SDSP query ($u$, $v$),  PathGES stores the search tokens of the canonical fragments related to the query in the position indicated by ($u$, $v$) in $M_1$. The token for retrieving these related tokens from $M_1$ is $tk_{(u, v)}$. Note that $M_1$ achieves a one-to-many mapping from SDSP query tokens to tokens corresponding to canonical fragments. A token $tk_{(u, v)}$ can correspond to multiple canonical fragment tokens.

\noindent Fig. \ref{EM1_EM2} shows an example of the two-level index structure of PathGES. $M_1$ and $M_2$ of Fig. \ref{EM1_EM2} correspond to the example in Fig. \ref{HLD}. To ensure the non-interactive way of queries, PathGES employs a response-revealing multimap encryption (EMM-RR) to encrypt $M_1$ and a response-hiding multimap encryption (EMM-RH) to encrypt $M_2$. The encrypted $M_1$ and $M_2$ are denoted as $EM_1$ and $EM_2$, respectively.

\begin{figure}[h]
  \centering
  \begin{minipage}{0.7\textwidth}
      \centering
      \includegraphics[width=1.0\linewidth]{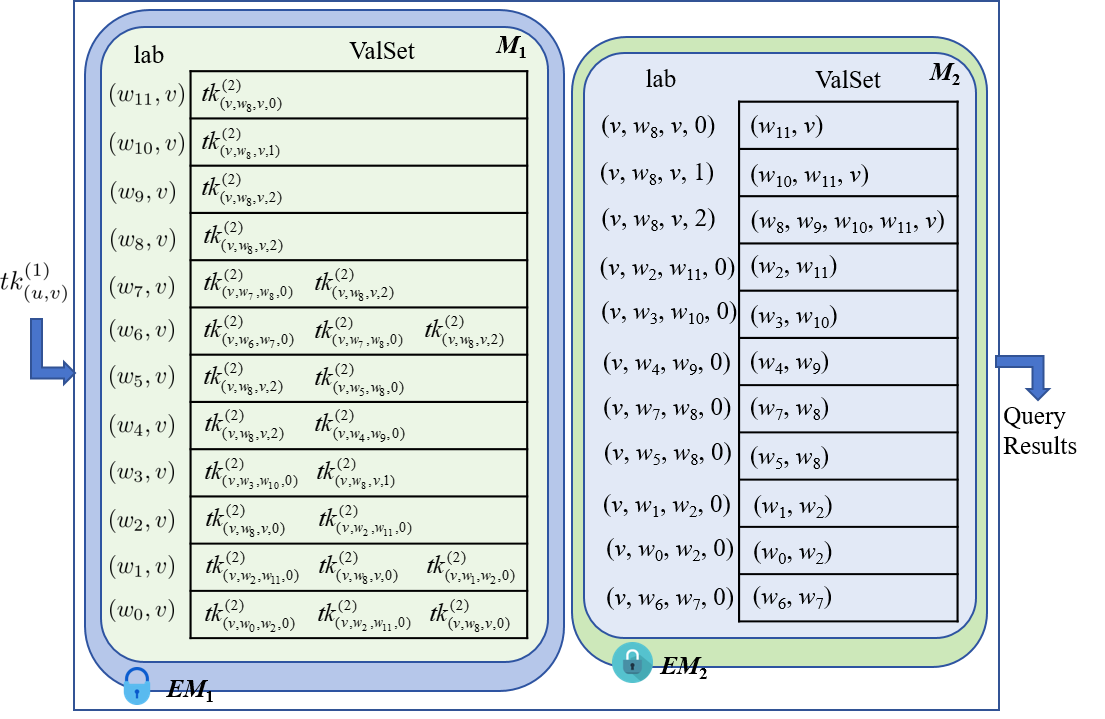}
      \caption{An example of the two-level index structure of PathGES.}
      \label{EM1_EM2}
    \end{minipage}
\end{figure}

\subsubsection{Query Processing}
The query processing in PathGES proceeds as follows:
\begin{itemize}
  \item \textbf{Token Generation.} The client generates a token $tk_{(u,v)}$ for the SPSP query ($u$, $v$) using the secret key.
  \item \textbf{Search.} The server uses $tk_{(u,v)}$ to retrieve from $M_1$ the set of second-level tokens $\{tk_{(r,u,v,j)}\}$ associated with the query.
  \item \textbf{Fragment Retrieval.} The server then uses each second-level token to retrieve the corresponding canonical segment from $M_2$.
  \item \textbf{Response.} The server returns all retrieved canonical segments to the client.
  \item \textbf{Reveal.} The client reconstructs the shortest path by concatenating the canonical segments in the correct order.
\end{itemize}

\subsubsection{Defense Mechanism}
PathGES aims to mitigate the structural leakage risks inherent in GKT, where tokens are in one-to-one correspondence with paths. The defense mechanism of PathGES relies on two key insights:

\begin{itemize}
  \item 1. \textbf{Path Fragmentation.} By decomposing shortest paths into smaller fragments using HLD, PathGES ensures that no single token represents an entire path.
  \item 2. \textbf{Canonical Segment Encoding.} By normalizing fragments to powers-of-two lengths, PathGES enables one-to-many mappings between tokens and paths, obscuring the exact path structure from the server.
\end{itemize}

\textbf{Example of Defense.} Consider that the shortest path corresponding to an SDSP query is decomposed via HLD and normalized, yielding two canonical fragments of lengths $1$ and $4$ (e.g., the SDSP query ($w_7$, $v$) in Fig. \ref{EM1_EM2}). $tk_{(w_7,v)}$ corresponds to two canonical fragment tokens $tk_{(r,w_7,w_8,0)}$ and $tk_{(r,w_8,v,2)}$. A canonical fragment token for $j = 2$ may correspond to a path of length 3 or a path of length 4, partially obscuring the path structure. The server cannot distinguish whether the original path had $3$ edges or $4$ edges.

\subsection{Structural Leakage Problems}
\subsubsection{HLD Imbalance}
Our analysis across six datasets (Table \ref{tab:countshortpaths}) reveals that over $98\%$ of decomposed subpaths have length $1$ or $2$ (each being a canonical segment itself) and $99\%$ of decomposed subpaths have length $1$, $2$, and $3$ (each also being a canonical segment itself). This implies that in PathGES, at least $99\%$ of tokens map to only one segment. As a result, if a shortest path consists solely of such short segments, its total length is directly exposed.

\begin{table*}[ht]
    \centering
    \footnotesize
    \caption{Statistics on path fragments lengths}
    \label{tab:countshortpaths}
    \begin{threeparttable}
        \begin{tabular}{|c|c|c|c|c|c|c|}
            \hline
            Data & internetRouting & Ca-GrQc & email-EU-core & facebook-combined & p2p-Gnutella08 & p2p-Gnutella04 \\
            \hline
            $|V|$            & 35    & 46    & 1,005   & 4,309      & 6,301      & 10,876   \\
            $|E|$            & 323   & 1,030 & 16,706  & 88,234     & 20,777     & 39,994   \\
            Density          & 0.543 & 0.995 & 0.02    & 0.011      & 0.001      & 0.0006     \\
            Total Fragments  & 1,141 & 2,069 & 878,147 & 16,281,762 & 34,619,060 & 101,440,282\\
            ShortFragment I  & 1,141 & 2,069 & 875,638 & 16,276,777 & 33,997,988 & 99664466   \\
            Percentage       & 100\% & 100\% & 99.71\% & 99.97\%    & 98.21\%    & 98.25\%     \\
            ShortFragment II & 1,141 & 2,069 & 877,940 & 16,281,560 & 34,578,445 & 101,354,239\\
            Percentage       & 100\% & 100\% & 99.98\% & 99.99\%    & 99.88\%    & 99.92\%    \\
            \hline
        \end{tabular}
        \begin{tablenotes}
            \footnotesize
            \item[1] $d = \frac{2|E|}{|V|^2 - |V|}$ represents the density of the graph.
            \item[2] ShortFragment I refers to these short fragments with lengths of 1 and 2.
            \item[3] ShortFragment II refers to these short fragments with lengths of 1, 2, and 3.
            \item[4] Percentage indicates the proportion of the short fragments among all fragments.
        \end{tablenotes}
    \end{threeparttable}
\end{table*}

\subsubsection{Side-Channel Leakage}
PathGES causes additional information leakage during the process of using a ($\mathcal{L}^{EMM-RH}_S$, $\mathcal{L}^{EMM-RH}_Q$)-secure EMM-RH scheme, where $\mathcal{L}^{EMM-RH}_Q$ $=$ ($QP^{EMM-RH}$, $Vol^{EMM-RH}$), $Vol^{EMM-RH}(M_2, lab)$ $=$ $|M_2[lab]|$. The value of $M_2[lab]$ is the ciphertext of a canonical path fragment. The canonical fragment encoding method used in PathGES makes the length of the path fragments stored in $M_2$ exhibit a regularity, that is, being a power of $2$. Therefore, the size of the corresponding ciphertext also shows a certain regularity. By observing all the ciphertexts in $M_2$ or listening to multiple responses, an adversary can consider the smallest ciphertext size as the ciphertext size corresponding to a path of length $1$. Then, by observing the target response, the adversary can infer the path length of the query path based on the size of the ciphertext contained in the target response. The standardization (i.e., the regularity shown) of path fragments brought about by the canonical fragment encoding makes it possible to conduct side-channel attacks on PathGES. We have verified this through experiments as well. This information leakage is not defined in $\mathcal{L}^{PathGES}_Q$. Therefore, we consider that PathGES has vulnerabilities to side-channel attacks. The Merge-and-Divide mechanism of BlindGES makes the lengths of the fragments stored in $M_2$ be fixed at two lengths, $l_1$ and $l_2$. Therefore, BlindGES will not cause additional information leakage during the use of a EMM-RH scheme.

\subsubsection{Subpath Structure Leakage}
PathGES pads a path with a length of $l$ to a path of $2^{\lfloor \log n \rfloor + 1}$. Then, $\lfloor \log n \rfloor + 2$ canonical path fragments with lengths of $2^j$ ($j$ $=$ $0$, $1$, $\cdots$, $\lfloor \log n \rfloor + 1$) are extracted, every time starting from the vertex (assumed to be $v$) closest to the root on this path. Suppose the length of this path is $2^{\lfloor \log n \rfloor}$ $+$ $k$, where $1$ $\leq$ $k$ $\leq$ $2^{\lfloor \log n \rfloor + 1}$ $-$ $2^{\lfloor \log n \rfloor}$. The ($2^{\lfloor \log n \rfloor}$ $+$ $j$)-th vertex on this path is $v_j$, ($j$ $=$ $1$, $\cdots$, $k$), and assume that the token corresponding to the canonical path fragment with a length of $2^{\lfloor \log n \rfloor + 1}$ for this path is $tk$. By retrieving $M_1$, it is found that the token sets corresponding to the SPSP queries ($v$, $v_j$) ($j$ $=$ $1$, $\cdots$, $k$) all contain $tk$. For the query ($v$, $v_1$), it obtains additional information: the shortest path information of ($v$, $v_{j^{\prime}}$) ($j^{\prime}$ $=$ $2$, $\cdots$, $k$). PathGES uses the multi-path fragment information contained in the canonical fragments to achieve a one-to-many mapping between tokens and paths. However, the canonical path fragments contain multiple path fragments with a sub-path structure (i.e., short paths are sub-paths of long paths), leading to additional information leakage.

\subsection{Summary}
PathGES fails to achieve its intended design goals. The prevalence of non-highly imbalanced trees in real-world graphs prevents generation of sufficiently long canonical fragments. Over $99\%$ of tokens map to a single unique path fragment, creating a critical vulnerability that enables query recovery attacks.

\section{Our Attack: Fragment Tree Attack Methodology}
\subsection{Threat Model and Assumptions}
This section defines the threat model and assumptions underlying the Fragment Tree attack.

We assume the server acts as an honest-but-curious adversary. While honestly answering SDSP queries, it also attempts to recover the underlying query content by leveraging available leakage. The attack aims to infer the plaintext values of executed queries based on the given graph and query-induced leakage.

Following the security assumptions adopted in the Falzon-Paterson attack \cite{falzon2022efficient} and in PathGES \cite{falzon2024PathGES}, we assume the graph is public. That is, the server knows the original graph that was encrypted to produce $EM_1$ and $EM_2$. As the adversary, the server can acquire all initial search tokens. Furthermore, since PathGES instantiates $M_1$ using a response-revealing encrypted multimap scheme, the server can reconstruct $M_1$ after observing a sufficient number of initial token queries. We also assume the server is familiar with standard shortest path algorithms and the HLD algorithm.

The adversary aims to: 1) \textbf{Query Recovery.} For each observed query token $tk$, recover the corresponding source-destination pair ($u$, $v$) that the client queried. 2) \textbf{Approximate Query Recovery.} If exact recovery is impossible, narrow the candidate set to as few possibilities as possible (ideally 2-5 candidates). 3) \textbf{Path Length Inference.} Infer the length of the shortest path from the response size (side-channel attack).

\subsection{Overview of Attack Process}
The basic process of our attack is as follows: First, the server constructs an SDSP tree for each vertex based on a given graph, and executes the HLD algorithm on each SDSP tree to obtain the corresponding $PSet$ (these two steps are the same as those in PathGES). Then, the server constructs fragment trees introduced in this paper based on $PSet$, while building query trees based on the leaked information of PathGES (for example, $M_1$). Finally, the query recovery attack on PathGES is implemented by leveraging the isomorphic relationships between fragment trees and query trees.

For example, Fig. \ref{FragmentTree} shows the fragment trees constructed based on $PSet_v$ derived from Fig. \ref{HLD}, as well as the query trees built from $M_1$ in Fig. \ref{EM1_EM2}. If they are isomorphic, we can implement the following query recovery attack. By comparing the topology between the fragment trees and the query trees, we can determine that $tk^{(2)}_{(v, w_8, v, 1)}$ maps to the path fragment ($w_{10}$, $w_{11}$, $v$), $tk^{(2)}_{(v, w_3, w_{10}, 0)}$ maps to ($w_3$, $w_{10}$), $tk^{(2)}_{(v, w_8, v, 0)}$ maps to ($w_{11}$, $v$), $tk^{(2)}_{(v, w_2, w_{11}, 0)}$ maps to ($w_2$, $w_{11}$), $tk^{(2)}_{(v, w_8, v, 2)}$ maps to ($w_8$, $w_9$, $w_{10}$, $w_{11}$, $v$), $tk_{(v, w_7, w_8, 0)}^{(2)}$ maps to ($w_7$, $w_8$), and $tk^{(2)}_{(v, w_6, w_7, 0)}$ maps to ($w_6$, $w_7$). However, since the tokens $tk^{(2)}_{(v, w_1, w_2, 0)}$ and $tk_{(v, w_0, w_2, 0)}^{(2)}$ share the same topology, we cannot accurately distinguish their mappings to the path fragments ($w_1$, $w_2$) and ($w_0$, $w_2$). Similarly, the mappings between $tk^{(2)}_{(v, w_4, w_9, 0)}$, $tk^{(2)}_{(v, w_5, w_8, 0)}$ and the path fragments ($w_4$, $w_9$), ($w_5$, $w_8$) cannot be distinguished either. Ultimately, we can uniquely recover the query pairs $(w_{11}, v)$, $(w_2, v)$, $(w_{10}, v)$, $(w_3, v)$, $(w_7, v)$, and $(w_6, v)$. The query pairs ($w_9$, $v$) and ($w_8$, $v$) are mapped by the same token, while the token sets for ($w_1$, $v$), ($w_0$, $v$), ($w_4$, $v$), and ($w_5$, $v$) are indistinguishable, thus they cannot be directly recovered.

\begin{figure*}[!htbp]
  \centering
  \begin{minipage}{0.7\textwidth}
      \centering
      \includegraphics[width=0.98\linewidth]{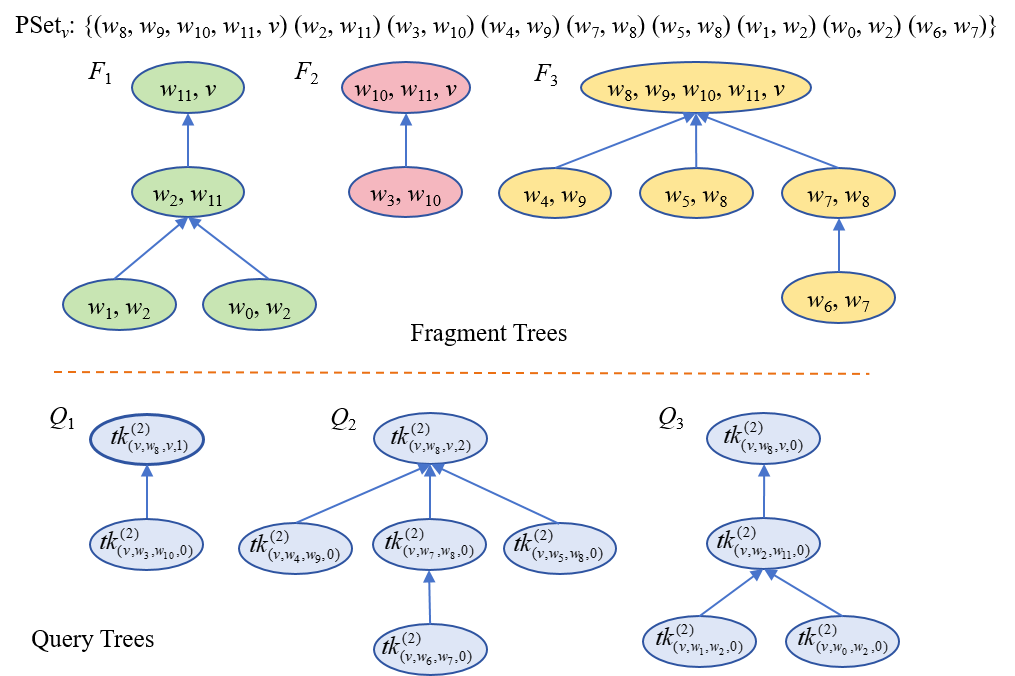}
      \caption{Examples of fragment trees and query trees constructed by our attack.}
      \label{FragmentTree}
    \end{minipage}
\end{figure*}

The above example illustrates that our query recovery attack method can effectively launch an attack on PathGES. The key factors enabling our attack method to achieve effective results include: What is a fragment tree? How to construct fragment trees based on $PSet$? How to build query trees based on $M_1$? How can we prove that the constructed fragment trees and query trees are isomorphic? Below, we will clarify these key issues one by one in three subsections.

\subsection{Fragment Tree Construction}
\subsubsection{Definition of Fragment Tree}
A fragment tree is a binary tuple $F$ $=$ ($P(F)$, $E(F)$), where:
\begin{itemize}
  \item $P(F)$ is a set of nodes, each corresponding to a collection of disjoint path fragments
  \item $E(F)$ a set of directed edges representing concatenation relationships between fragments
\end{itemize}

\noindent The fragment tree takes disjoint path fragments as nodes, with the path fragment containing the destination vertex serving as the root. Suppose there are two disjoint path fragments, $p_1$ $=$ ($w$, $\cdots$, $w'$) and $p_2$ $=$ ($u$, $\cdots$, $w$), and they share a connecting anchor $w$. Thus, $p_1$ is the direct predecessor of $p_2$ (i.e., the relationship $p_1$ $||$ $p_2$ exists). There is a directed edge from the node corresponding to $p_2$ to the node corresponding to $p_1$ in the corresponding fragment tree.

\subsubsection{Fragment Tree Construction Algorithm}
Given an SDSP tree $T_v$ (rooted at $v$) of the given graph $G$, the server firstly executes the HLD algorithm on $T_v$ and obtains the path fragment set $PSet_v$. Then, it takes set $PSet_v$ and the root $v$ as inputs and invokes Algorithm \ref{Alg_BuildFragTrees} to generate the corresponding fragment trees. Algorithm \ref{Alg_BuildFragTrees} consists of two phases: 1) preprocessing of path fragments (steps 2-10), and 2) queue-based tree construction (steps 11-27).

\begin{figure}[!t]
	\begin{algorithm}[H]
		\caption{Building Fragment Trees}\label{Alg_BuildFragTrees}
		\LinesNumbered
		\KwIn{$PSet_v$, the root $v$.}
		\KwOut{A set $\mathcal{F}_v$ of fragment trees.}
        Initialize $PS$ $=$ $\emptyset$; $\mathcal{F}_v$ $=$ $\emptyset$;\\
        \While{each path fragment $p$ of $PSet_v$}
        {
            \eIf{the length of $P$ is $1$}
            {
                $PS$ = $PS$ $\bigcup$ $\{p\}$;
            }{
                \If{$2^{j-1}$ $<$ the length of $P$ $\leq$ $2^j$}
                {
                    \For{$k = 0$; $k \leq j -1$; $k++$}
                    {
                        $p'$ $\leftarrow$ The sub-fragment of length $2^k$ at the rightmost end of $p$;\\
                        $PS$ = $PS$ $\bigcup$ $\{p'\}$;
                    }
                    $PS$ = $PS$ $\bigcup$ $\{p\}$;
                }
            }
        }
        \While{$PS$ is not empty}
        {
            \eIf{there exists a path fragment ending with $v$ in $PS$}
            {
                Initialize an empty fragment tree $F$ $=$ ($P(F)$, $E(F)$); \\
                Initialize an empty queue $Queue$; \\
                Remove this fragment from $PS$ and insert it into $Queue$; \\
                \While{$Queue$ is not empty}
                {
                    $p_2$ $\leftarrow$ Delete the front element of $Queue$;
                    Create a node $node_2$ based on $p_2$ and incorporate it into $P(F)$; \\
                    \If{$node_2$ is the first node to be created}
                    {
                        Use it as the root of $F$;
                    }
                    \While{there exists a path fragment $p_1$ in $PS$ such that $p_1$ $||$ $p_2$ holds}
                    {
                        Remove $p_1$ from $PS$ and insert it into $Queue$; \\
                        Create a node $node_1$ based on $p_1$ and incorporate it into $P(F)$; \\
                        Add an edge from $node_1$ to $node_2$ into $E(F)$; \\
                        $p_2 \leftarrow p_1$; $node_2 \leftarrow node_1$;
                    }
                }
                $\mathcal{F}_v$ $=$ $\mathcal{F}_v$ $\bigcup$ $\{F\}$;
            }{
                break;
            }
        }
        \textbf{Return} $\mathcal{F}_v$.
	\end{algorithm}
\end{figure}

Steps 2 to 10 of Algorithm \ref{Alg_BuildFragTrees} involve preprocessing $PSet_v$. This preprocessing converts a long path fragment into multiple sub-path fragments of different lengths that share the same ending vertex. For example, a path fragment of length 4 with the last vertex being $w$ will be converted into three sub-path fragments of lengths 1, 2, and 4 respectively, all ending with vertex $w$; a path fragment of length 6 with the last vertex being $w$ will be converted into four sub-path fragments of lengths 1, 2, 4, and 6 respectively, all ending with vertex $w$. For another example, the path fragment ($w_8$, $w_9$, $w_{10}$, $w_{11}$, $v$) in Fig. \ref{FragmentTree} is converted into three sub-path fragments ($w_{11}$, $v$), ($w_{10}$, $w_{11}$, $v$), and ($w_8$, $w_9$, $w_{10}$, $w_{11}$, $v$). Among the preprocessed path fragments, more than one fragment may include the root $v$. Therefore, one $PSet_v$ corresponds to a fragment-tree forest.

\subsection{Query Tree Construction}
\subsubsection{Background: Query Tree Definition}
As defined in Section \ref{datastructure}, a query tree $Q = (TK, E')$ is a tree constructed by the server after collecting all leaked tokens, where:
\begin{itemize}
  \item $TK$ is the set of all leaked tokens observed in the queries, with each token corresponding to a node
  \item $E'$ is a set of directed edges representing concatenation relationships between tokens
\end{itemize}

\noindent For a fixed destination vertex $v$, the query tree $Q_v$ has $tk_{(v,v)}$ as its root. If a token $tk_{w,v}$ is observed to be followed by $tk_{(v,v)}$ in a query sequence, a directed edge is added from $tk_{w,v}$ to the root.

\subsubsection{Token Sorting via Bucket-Based Method}
After successfully constructing fragment trees, we need to build query trees that are isomorphic to them. Since $EM_1$ adopts a response-revealing multi-map encryption algorithm, after the client completes all SDSP queries, the server can obtain full information about $M_1$. Our attack method use $M_1$ to construct query trees isomorphic to fragment trees. The key challenge is how to discover the direct predecessor relationships between the tokens stored in $M_1$. We have designed a bucket-based multi-round local sorting algorithm for $M_1$ (see Algorithm \ref{alg: TokenSorting}) to address this critical issue.

\begin{figure}[!t]
	\begin{algorithm}[H]
		\caption{Bucket-based Token Sorting}\label{alg: TokenSorting}
		\LinesNumbered
		\KwIn{$M_1$.}
		\KwOut{Sorted Buckets $D_1$, $\cdots$, $D_Z$.}
        Bucketing and deduplicating $M_1$ yields $D_1$, $\cdots$, $D_Z$; \\
        \For{$z = 2$; $z \leq Z$; $z++$}
        {
            \For{$j = 1$; $j \leq |D_{z-1}|$; $j++$}
            {
                \For {$k = 1$; $k \leq |D_z|$; $k++$}
                {
                    \If{$TK^z_{k}$ $\supset$ $TK^{z-1}_{j}$}
                    {
                        $TK^z_{k}$ $\leftarrow$ Merge the ordered sequences $TK^{z-1}_{j}$ and $TK^z_{k}$ $-$ $TK^{z-1}_{j}$;
                    }
                }
            }
        }
        \textbf{Return} $D_1$, $\cdots$, $D_Z$.
	\end{algorithm}
\end{figure}

\textbf{Example.} we take the $M_1$ in Fig. \ref{EM1_EM2} as an example to elaborate on the basic idea of this algorithm. The specific processing process is shown in Fig. \ref{order}.

First, we perform bucketing processing on $M_1$ based on the number of stored tokens. Storage locations with the same number of tokens belong to the same bucket, and duplicate entries within the bucket are removed (see Fig. \ref{order} (a) $\rightarrow$ (b)). We use $D_z$ to denote the set of storage locations that store $z$ tokens, and $TK^z_j$ to denote the set of tokens stored in the $j$-th storage location within $D_z$.

Next, we sort the tokens in each storage location within each bucket. When $M_1$ is divided into $Z$ buckets, a total of $Z-1$ rounds of part-range sorting will be performed. In the $z$-th ($z$ $=$ $2$, $\cdots$, $Z$) round of local sorting, the tokens stored in each location of $D_z$ are sorted based on the pre-sorted tokens in $D_{z-1}$. The so-called ordered bucket $D_{z-1}$ means that all $TK^{z-1}_1$, $\cdots$ $TK^{z-1}_{|D_{z-1}|}$ are ordered sequences. It is important to note that the tokens in each storage location of $D_1$ are already in a trivial order. The rule for sorting $D_z$ based on the pre-sorted $D_{z-1}$ is (see Fig. \ref{order} (b) $\rightarrow$ (c), (c) $\rightarrow$ (d)): For each storage location in $D_z$, identify the tokens that appear in $D_{z-1}$, arrange them in the order they appear in $D_{z-1}$ at the front, and place the remaining token in that location at the end. Obviously, the number of tokens stored in each location of $D_z$ is exactly one more than that stored in each location of $D_{z-1}$.

\begin{figure*}[!htbp]
  \centering
  \begin{minipage}{1.0\textwidth}
      \centering
      \includegraphics[width=0.98\linewidth]{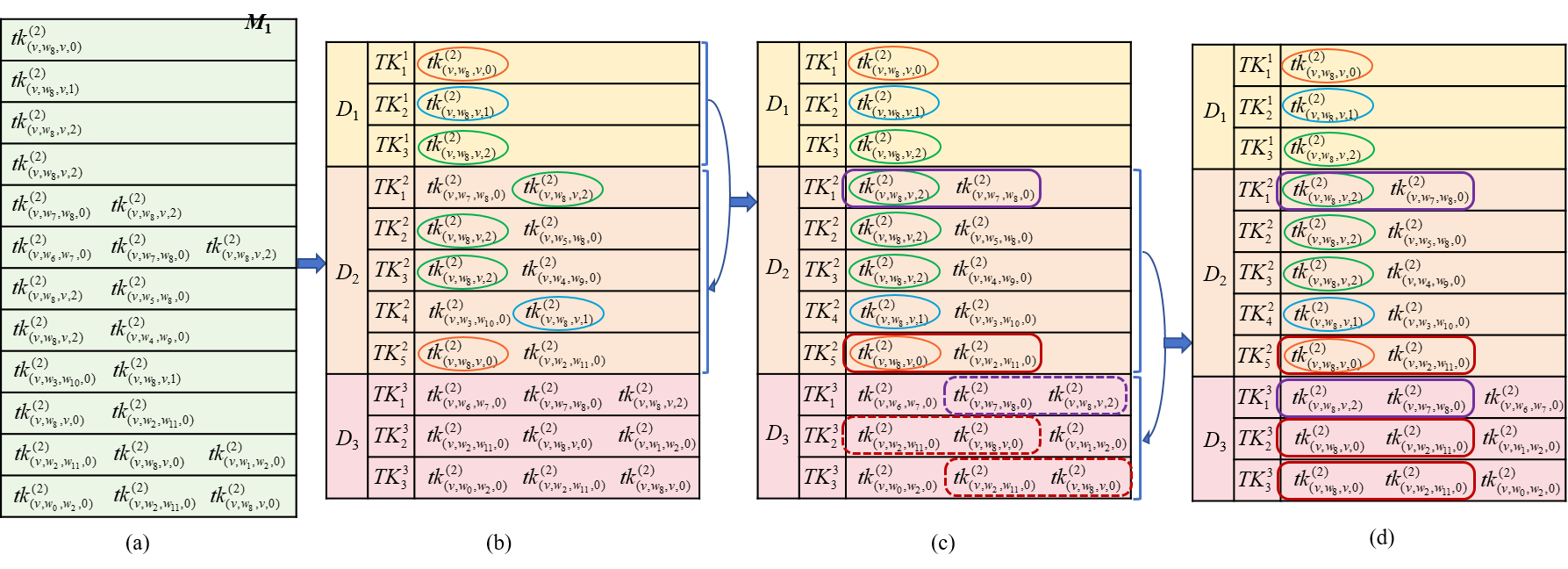}
      \caption{Illustrates the process of bucket-based token sorting. }
      \label{order}
    \end{minipage}
\end{figure*}

\subsubsection{Query Tree Generation Algorithm}
After obtaining the ordered $D_1$, $\cdots$, $D_Z$, the server generates query trees in a layer-by-layer manner using Algorithm \ref{Alg_BuildQueryTrees}). The nodes corresponding to the tokens in $D_1$ are the roots of query trees. Then, a node is generated for the $z$-th token in each token set within $D_z$, and a directed edge is subsequently added from this node to the node corresponding to its immediate predecessor token in the token set (i.e., the $(z-1)$-th token).

\begin{figure}[!t]
	\begin{algorithm}[H]
		\caption{Building Query Trees}\label{Alg_BuildQueryTrees}
		\LinesNumbered
		\KwIn{$M_1$.}
		\KwOut{Query Trees $Q_1$, $\cdots$, $Q_{|D_1|}$.}
        $D_1$, $\cdots$, $D_Z$ $\leftarrow$ Bucket-basedTokenSorting($M_1$); \\
        \For{$i = 1$; $i \leq |D_1|$; $i++$}
        {
            Create a node for the token in $TK^1_i$ to serve as the root of $Q_i$;
        }
        \For{$z = 2$; $z \leq Z$; $z++$}
        {
            \For{$j = 1$; $j \leq |D_z|$; $j++$}
            {
                Create a node for the $z$-th token in $TK^z_j$; \\
                Add a directed edge from this node to the node corresponding to the $(z-1)$-th token in $TK^z_j$;
            }
        }
        \textbf{Return} $Q_1$, $\cdots$, $Q_{|D_1|}$.
	\end{algorithm}
\end{figure}

\textbf{Example.} Taking the ordered buckets in Fig. \ref{order}(d) as an example, we will explain how to construct the query trees shown in Fig. \ref{FragmentTree}. The fact that there are three token sets in $D_1$ in Fig. \ref{order}(d) means that three query trees $Q_1$, $Q_2$, and $Q_3$ will be generated, with the tokens $tk_{(v, w_8, v, 1)}$, $tk_{(v, w_8, v, 2)}$, and $tk_{(v, w_8, v, 0)}$ in these three token sets serving as their respective roots. Because there are five token sets in $D_2$ in Fig. \ref{order}(d), there are a total of five nodes in the second layer, which correspond to the second tokens of these five token sets respectively. Three of them point to the root node corresponding to $tk_{(v, w_8, v, 2)}$, one of them points to the root node corresponding to $tk_{(v, w_8, v, 1)}$, and one points to the root node corresponding to $tk_{(v, w_8, v, 0)}$. By analogy, there are three nodes in the third layer, which correspond to the third tokens of the three token sets in $D_3$ respectively. Among them, two point to the second-layer node corresponding to $tk_{(v, w_2, w_{11}, 0)}$, and one points to the second-layer node corresponding to $tk_{(v, w_7, w_8, 0)}$.

\subsection{Isomorphism Proof}
This section establishes the theoretical foundation for the attack by proving the isomorphism between fragment trees and query trees.

\subsubsection{Definitions}
\textbf{Definition 5.1 (Consistency).} Let $G = (V, E)$ be a graph, and let the query sequence $q = (q_1, q_2, \dots, q_k)$ denote the $k$ shortest path queries issued by the client, where each query $q_i$ corresponds to a node pair $(u, v)$. Let $TK = (TK_1^1, TK_2^1, \dots, TK_k^x)$ denote the sequence of token sets leaked for all queries. If there exists an assignment $\pi : TK \to V \times V$ is a mapping from token set to SDSP queries,such that $L(G, q) = L(G, \pi(TK))$, then the mapping $\pi$ is said to be consistent with the leakage $L(G, q)$.

Informally, consistency requires that each query $(u, v)$ can correspond to an observable leaked token set.

\textbf{Definition 5.2 (Query Recovery).}  Given a query set $q = (q_1, q_2, \dots, q_k)$, where each $q_i$ denotes a shortest path query, and the corresponding leaked token sets are $TK = (TK_1^1, TK_2^1, \dots, TK_k^x)$, $\Pi$ is the set of all query-token mappings consistent with $L(G, q)$. If the adversary is able to output a mapping such that for any leaked token set $TK_m^n \in TK$, it can provide a set of possible original queries $\{\pi(TK_m^n) \mid \pi \in \Pi\}$, then we say the adversary achieves \emph{Query Recovery (QR)}.

Informally, query recovery requires that each $TK_m^n$ can be mapped through $\pi(TK_m^n)$ to determine a set of approximate queries consistent with the leakage. Although the approximate query set may not be unique, the correct query pair must be contained within each approximate query set. To explain our query recovery, we take an example from Fig. \ref{FragmentTree} and Fig. \ref{order}:

\textbf{Example.} In Fig.\ref{order}, we observe the token leaked $TK^1_1 = \{tk_{(v, w_8, v, 0)}^{(2)}\}$, $TK_5^2 = \{tk_{(v,w_8,v,0)}^{(2)},  tk_{(v, w_2,w_{11}, 0)}^{(2)}\}$, $TK_2^3 = \{tk_{(v,w_8,v,0)}^{(2)},  tk_{(v, w_2,w_{11}, 0)}^{(2)}, tk_{(v, w_1,  w_2 , 0)}^{(2)}\}$ and $TK_3^3 = \{tk_{(v,w_8,v,0)}^{(2)},  tk_{(v, w_2,w_{11}, 0)}^{(2)}, tk_{(v, w_0,  w_2 , 0)}^{(2)}\}$. Attackers can use these leaked tokens to construct the query tree $Q_3$ shown in Fig. \ref{FragmentTree}, which is isomorphic to $F_1$.To achieve query recovery, the attacker must produce the following outputs:
\[
\begin{aligned}
& \{\quad TK^1_1: \{(w_{11}, v)\} \quad TK^2_5: \{(w_2, v)\}  \\
& TK^3_2: \{(w_0, v), (w_1, v)\}\quad TK^3_3: \{(w_0, v), (w_1, v)\} \quad \}.
\end{aligned}
\]
Among the above four outputs, $TK^1_1$ and $TK^2_5$ can be accurately recovered, while $TK^3_2$ and $TK^3_3$ cannot uniquely determine the corresponding query pairs due to indistinguishable topology. It is worth noting that the failure to precisely recover a query pair may result not only from indistinguishable topological structures, but also from ambiguity caused by overly long path segments. For example, if the leaked token $tk_{(v, w_8, v, 2)}^{(2)}$ is mapped to the path segment $(v, w_{11}, w_{10}, w_9, w_8)$, we can determine that the destination is $v$, but it is difficult to distinguish whether the starting point is $w_8$ or $w_9$.

\textbf{Definition 5.4 (Fully Query Recovery).}
If the adversary has achieved query recovery (QR) and satisfies $|\Pi| = 1$, then the adversary is said to have achieved \emph{fully query recovery (FQR)}.

Informally, \emph{Fully Query Recovery (FQR)} refers to a precise form of query recovery, where the approximate query set recovered by the attacker contains only one unique query pair.

We further elaborate on the properties of isomorphic trees and prove that two trees with identical path name sets are isomorphic. In the Falzon-Paterson attack, consider two shortest path trees $T_r = (V, E)$ and $T'_{r'} = (V', E')$ rooted at different nodes $r$ and $r'$, respectively. Suppose there exists a path $P_{u, r} = (u, w_1, w_2, \dots, w_t, r)$ from node $u$ to root $r$ in $T_r$, and a path $P_{u', r'} = (u', w'_1, w'_2, \dots, w'_\ell, r')$ from node $u'$ to root $r'$ in $T'_{r'}$.

\subsubsection{Core Lemmas and Theorems}
\textbf{Theorem 5.1.} If there exists a tree isomorphism $\varphi: T_r \rightarrow T'_{r'}$ such that $\varphi(u) = u'$, then it follows that the path lengths must be equal (i.e., $t = \ell$), and for any $i \in [1, t]$, we have $\varphi(w_i) = w'_i$.

In other words, the paths strictly correspond one-to-one under the isomorphism. The Theorem 5.1 is proved in the Falzon-Paterson attack\cite{falzon2022efficient}.The fragment tree we construct is essentially a structural compression of the original SDSP tree. It aggregates consecutive path nodes in the original tree into segments, forming a coarser-grained hierarchical structure. The root of the fragment tree corresponds to the segment containing the root node $r$ and the fragment tree is logically equivalent to the SDSP tree. It still preserves the core properties of the SDSP tree: for example, the path from any node to the root is acyclic and unique within the fragment tree, and the path relationships between any two nodes remain unchanged.

Let $F = (P(F), E(F))$ and $F' = (P'(F'), E'(F'))$ be two two isomorphic  fragment trees. For $F$, there exists a path from a start node $u$ to an end node $r$, whose sequence of path fragments is denoted by \(P_{u,r} = (p_u,p_1, p_2, \dots, p_a,p_r)\), where each $p_i$ consists of a sequence of several nodes, with the start node \(u \in p_u\) and the end node \(r \in p_r\). Similarly, for $F'$, there exists a path from start node \(u'\) to end node \(r'\), whose path fragment sequence is \(P'_{u', r'} = (p'_{u'},p'_1, p'_2, \dots, p'_b,p'_{r'})\), where each $p'_i$ consists of a sequence of several nodes, with \(u' \in p'_{u'}\) and \(r' \in p'_{r'}\).

\textbf{Lemma 5.1.} If there exists a fragment-to-fragment mapping \(\varphi: F \to F'\) such that \(\varphi(p_u) = p'_{u'}\), then it follows that \(a = b\), and for all \(i \in [1,a]\), we have \(\varphi(p_i) = p'_i\).

\textbf{Proof.} Given that $\varphi(p_u) = p'_{u'}$ and according to the definition of rooted tree isomorphism, we also have $\varphi(p_r) = p_{r'}$. The isomorphism between graphs preserves edges; therefore, the mapping function $\varphi$ must map the path fragment sequence $P_{u,r}$ to $P'_{u',r'}$. In other words, the number of path fragments forming these two shortest paths must be equal, hence $a = b$. Combining the above observations, we obtain the correspondence between the edges of the two isomorphic trees as follows:\\
$(\varphi(p_u), \varphi(p_1)) = (p'_{u'}, p'_1)$, $(\varphi(p_1), \varphi(p_2)) = (p'_1, p'_2),\dots,
 \\(\varphi(p_a), \varphi(p_r)) = (p'_b, p'_{r'})$. \\Thus, it is proven that \(\varphi(p_i) = p'_i\).
\hfill $\square$

Let $Q = (TK, E')$ be a query tree, and the leaked and sorted token set for the query $(u', r')$ be $TK_{(u', r')} = (tk_{u'},tk_1, tk_2, \dots, tk_c,tk_{r'})$. To facilitate the description of the logical and quantitative relationship between the token $tk_{(r,u,v,j)}$ and the path fragment $p$, we abbreviate $tk_{(r,u,v,j)}$ as $tk_i$. The path fragment mapped by \(tk_{u'}\) contains the start node \(u'\), and the path fragment mapped by \(tk_{r'}\) contains the root \(r'\).

Based on Lemma 5.1, we can derive a similar conclusion as follows:

\textbf{Lemma 5.2.} If there exists a mapping from fragments to tokens $\varphi: F \to Q$ suchs that $\varphi(p_u) = tk_{u'}$, then it must hold that $a = c$, and for any $i \in [1, a]$, we have $\varphi(p_i) = \mathsf{tk}_i$.

We use $F[p_i]$ to denote the subtree rooted at the path fragment $p_i$, and $Q[tk_i]$ to denote the subtree rooted at the token $tk_i$.

\textbf{Theorem 5.3.}
Let $\varphi$ be an isomorphic mapping from $F$ to $Q $. This holds if and only if there exists a perfect matching between each $p_i \in F$ and $tk_i \in Q$ such that any subtree $F[p_i]$ in $F$ can be mapped to the subtree $Q[tk_i]$.

\textbf{Proof.}
Suppose $\varphi$ is an isomorphic mapping from $F$ to $Q$, and $(p_i, p_{i+1})$ is an edge in the fragment tree. If $\varphi(p_i) = tk_i$ and $\varphi(p_{i+1}) = tk_{i+1}$ with $p_i, p_{i+1} \in F$, then by the edge-preserving property of isomorphism, $\varphi$ must ensure the existence of the edge $(tk_i, tk_{i+1})$, i.e., $\varphi(p_i, p_{i+1}) = (tk_i, tk_{i+1})$.
Extending this property to subtrees yields $F[p_i] \cong Q[tk_i]$.
\hfill $\square$

\subsubsection{PathName-Based Isomorphism}
The computation of path names is the key to finding an isomorphic fragment tree for the query tree $Q$. Recall from Section \ref{TreeIsoCanNames} taht PathName encodes the path structure from a node to the root. We denote by $PathName_F(p_i)$ the pathname from a node $p_i$ to the root node $p_r$ in the fragment tree $F$. According to the definition of pathnames, we have:\\
$PathName_F(p_u) = h(Name(p_u)  \Vert  PathName_F(p_1))$,\\
$PathName_F(p_1) = h(Name(p_1)  \Vert  PathName_F(p_2))$,\\
$\dots$\\
$PathName_F(p_c) = h(Name(p_c)  \Vert  PathName_F(p_r))$,\\
$PathName_F(p_r) = h(Name(p_r))$\\
Similarly, we denote by $PathName_Q(tk_{i})$ the pathname from a token $tk_{i}$ to the root token $tk_{r’}$ in the query tree $Q$. According to the definition of pathnames, we have:\\
$PathName_Q(tk_{u'}) = h(Name(tk_{u'}) \Vert PathName_Q(tk_1))$,\\
$PathName_Q(tk_1) = h(Name(tk_1) \Vert PathName_Q(tk_2))$,\\
$\dots$\\
$PathName_Q(tk_{c}) = h(Name(tk_{c}) \Vert PathName_Q(tk_{r'}))$,\\
$PathName_Q(tk_{r'}) = h(Name(tk_{r'}))$

\textbf{Theorem 5.2.} If the fragment tree $F = (P(F), E(F))$ and the query tree $Q = (TK, E')$ are isomorphic, and the mapping function $\varphi$ maps $p_u \in P_{u,v}$ to $tk_{u'} \in TK_{u',r'}$, then it holds that $PathName_{F}(p_u) = PathName_Q(tk_{u'})$.
Conversely, if $PathName_{F}(p_u) = PathName_{Q}(tk_{u'})$, then $F \cong Q$, and there exists a mapping function $\varphi$ that maps $p_i$ to $tk_i$.

\textbf{Proof.} For the forward direction, since $F_{pr}$ and $Q_{r'} $ are isomorphic, by Theorem 4.9 we know that $F[p_i] \cong Q[tk_i]$. For any $p_i$ and $tk_i$, we have $Name(p_i) = Name(tk_i)$.
Moreover, since $p_u$ is mapped to $tk_{u'}$, by Lemma 4.7 we have $a = c$.
Finally, by the definition of path names, the path name is the concatenation of the canonical names from the root to the start node in each tree. Hence, we conclude that $PathName_F(p_u) = PathName_{Q}(tk_{u'})$.

For the reverse direction, from $PathName_F(p_u) = PathName_Q(tk_{u'})$ we obtain $Name(p_u) = Name(tk_{u'})$, $Name(p_1) = Name(tk_1)$, $\cdots$, $Name(p_r) = Name(tk_{r'})$. Thus, $F[p_i] \cong Q[tk_i]$, and therefore $F \cong Q$. Hence, there exists a mapping function $\varphi$ that perfectly matches $p_i$ and $tk_i$.
\hfill $\square$

\textbf{Corollary 5.1.} According to Theorem 4.8, if the fragment tree $F$ and the query tree $Q$ are isomorphic, and each path fragment $p_i$ in the fragment tree and each token $tk_i$ in the query tree $Q $ has a unique path name, then intuitively, every $p_i$ in $F$ can only be mapped to the token $tk_i$ in $Q$ that has the same path name. Similarly, each token $tk_i$ will also be mapped to the path fragment $p_i$ with the same path name.

\subsection{Attack Effectiveness}
We conducted the attack experiments on each dataset under the same environment using Python 3.10. The experimental server is equipped with six NVIDIA 8000-48G GPUs, each with 48GB VRAM. The GPU driver version is 535.104.12, supporting up to CUDA version 12.2.
The CPU is an Intel Xeon E5-2699 v4 with 11 cores and 62.9GB memory. NetworkX (version 3.5) was used for graph-related operations, and the shortest paths were computed via a single-source shortest path algorithm. NumPy (version 2.3.1) was used for numerical computations. Matplotlib (version 3.10.3) was used for visualization. For cryptographic primitives, we used the Cryptography library (version 45.0.5). Symmetric encryption employed AES-CBC mode (with 16-byte keys), and SHA-256 was used as the hash function. Tokens were generated using HMAC-SHA256, truncated to 128 bits.
The source code used to implement and evaluate the Fragment Tree attack is publicly available at: https://anonymous.4open.science/r/Fragment-Trees/

\subsubsection{Graph Datasets}\label{DataSets}
We conducted attack experiments on seven datasets, which are the same datasets used in the Falzon–Paterson attack. All datasets are derived from real-world networks, with the number of vertices ranging from as few as 35 to as many as 22,687. The graph densities range from 0.0002 to 0.995. Below, we provide a brief description of these seven datasets.

\textbf{InternetRouting}: This dataset originates from the University of Oregon Route Views Project and was extracted using the dense subgraph extraction algorithm proposed by Charikar\cite{charikar2000greedy}, as adopted by Ambavi et al.

\textbf{Ca-GrQc}: This dataset represents the collaboration network in the field of General Relativity and Quantum Cosmology on arXiv.org, covering the period from January 1993 to April 2003. It was also extracted as a subgraph using the dense subgraph extraction algorithm.

\textbf{email-Eu-core}: This dataset captures email communications among researchers. An edge exists between two nodes if one sent an email to the other or received an email from the other.

\textbf{facebook}: This dataset is derived from Facebook’s friendship network. An edge exists between two users if they are friends.

\textbf{p2p-Gnutella08, p2p-Gnutella04, p2p-Gnutella25}: These three datasets describe snapshots of the Gnutella peer-to-peer (P2P) network collected on August 4 and August 8, and August 25, 2002, respectively. Nodes represent hosts within the Gnutella network topology, and edges represent connections between these hosts.

\noindent Note that PathGES also uses the six datasets listed above, except for p2p-Gnutella25.The datasets used in our evaluation are publicly available and
can be accessed at:http://snap.stanford.edu/data.

\subsubsection{Experimental Design}
We conducted experiments in three aspects: 1) We executed the HLD algorithm on seven datasets to calculate the proportions of path fragments with lengths 1, 2, and 3, aiming to verify whether PathGES generates a large number of one-to-one mapping relationships during actual execution. 2) In the dimension of accurate query recovery attacks, we compared the attack effectiveness of the Falzon-Paterson attack on GKT and that of our attack method on PathGES (an enhanced version of GKT) based on seven datasets, thereby illustrating that our method can launch effective query recovery attacks against PathGES, which is resistant to the Falzon-Paterson attack. 3) Based on six datasets, we evaluated the effectiveness of our method in approximate query recovery when launching query recovery attacks against PathGES.

The experimental results of the first aspect are shown in Table \ref{tab:datasets2}. These results indicate that PathGES indeed generates a large number of one-to-one mapping relationships when executed on real-world datasets, verifying the deficiency of PathGES in hiding topological structure information that we identified based on its implementation principle, and laying the foundation for the effective implementation of the attack method we proposed.

\begin{table*}[ht]
    \centering
    \caption{Statistics on the Number of Path Fragments with Different Lengths}
    \label{tab:datasets2}
    \begin{threeparttable}
        \begin{tabular}{c|c|c|c|c|c|c}
            \hline
            Data & $|V|$ & $|E|$ & d & Total Fragments & Short Fragments & Percentage \\
            \hline
            internetRouting & 35 & 323 & 0.543 & 1141 & 1141 & 100\% \\
            Ca-GrQc & 46 & 1030 & 0.995 & 2069 & 2069 & 100\% \\
            email-EU-core & 1005 & 16,706 & 0.02 & 878147 & 877940 & 99.98\% \\
            facebook-combined & 4309 & 88234 & 0.011 & 16281762 & 16281560 & 99.99\% \\
            p2p-Gnutella08 & 6301 & 20777 & 0.001 & 34619060 & 34578445 & 99.88\% \\
            p2p-Gnutella04 & 10876 & 39994 & 0.0006 & 101440282 & 101354239 & 99.92\% \\
            p2p-Gnutella25 & 22687 & 54705 & 0.0002 & 472236309 & 472094187 & 99.97\% \\
            \hline
        \end{tabular}
        \begin{tablenotes}
            \footnotesize
            \item[1] $d = \frac{2|E|}{|V|^2 - |V|}$ represents the density of the graph.
            \item[2] Short fragments refer to those with lengths of 1, 2, and 3.
            \item[3] Percentage indicates the proportion of the short fragments (lengths 1, 2, or 3) among all fragments.
        \end{tablenotes}
    \end{threeparttable}
\end{table*}

The experimental results of the second aspect are shown in Table \ref{tab:Result_Attack}. The results show that, for the Falzon-Paterson attack and our attack method, the worst attack performance occurred on the Ca-GrQc graph, mainly due to its extremely high edge density of 0.995. In other words, most nodes in Ca-GrQc are almost directly connected, and the shortest path between any two nodes is often the direct edge connecting them. This results in a highly homogeneous topology for shortest paths, with path names being very similar across different paths, which severely weakens the distinguishability based on path names. The dataset with the best attack performance is p2p-Gnutella04, where approximately 10.236\% of queries can be uniquely matched to the corresponding subpath fragments through leaked information. The effectiveness of our attack in achieving accurate query recovery against PathGES—which is equipped with defense mechanisms against such attacks—reaches approximately 50\% of that achieved by the Falzon–Paterson attack against GKT, which lacks such defenses, across four datasets. On two datasets, our attack attains about 30\% of the effectiveness of the Falzon–Paterson attack, while on one dataset, it reaches roughly 10\%. The experimental results show that our method can exploit PathGES's defense mechanism's weakness in hiding topological structure information to launch effective query recovery attacks against it.

\begin{table*}[ht]
    \centering
    \caption{Comparison of Accurate Query Recovery Attack Effects}
    \label{tab:Result_Attack}
    \begin{threeparttable}
        \begin{tabular}{c|c|cc|cc}
        \hline
        \multirow{2}{*}{\textbf{Data}} & \multirow{2}{*}{\textbf{Total Query}} & \multicolumn{2}{c|}{\textbf{Falzon-Paterson to GKT}} & \multicolumn{2}{c}{\textbf{Our Attack to PathGES}} \\
        \cline{3-6}
        & & \textbf{Unique} & \textbf{Percentage} & \textbf{Unique} & \textbf{Percentage} \\
        \hline
        internetRouting    & 1190       & 28       & 2.353\%  & 5         & 0.420\% \\
        Ca-GrQc            & 2070       & 3        & 0.145\%  & 1         & 0.048\% \\
        email-EU-core      & 1009020    & 65659    & 6.507\%  & 20645     & 2.046\% \\
        facebook-combined  & 16309482   & 33634    & 0.206\%  & 19168     & 0.118\% \\
        p2p-Gnutella08     & 39696300   & 8519868  & 21.463\% & 3771341   & 9.500\% \\
        p2p-Gnutella04     & 118276500  & 25915785 & 21.911\% & 12106435  & 10.236\% \\
        p2p-Gnutella25     & 514677282  & 82736533 & 16.075\% & 45327736  & 8.807\% \\
        \hline
        \end{tabular}
        \begin{tablenotes}
            \footnotesize
            \item[1] \textit{Unique} indicates the number of query that can be precisely recovered during the attack.
            \item[2] \textit{Percentage} represents the proportion of such queries among all queries.
        \end{tablenotes}
    \end{threeparttable}
\end{table*}

The approximate query recovery attack is an attack where the target query cannot be accurately recovered; however, it can be confined to a limited range. To characterize this attack effect, we introduce the concept of \emph{query recovery capability}. The query recovery capability for the accurate query recovery analyzed earlier is regarded as 100\%. Suppose an attack can narrow the recovered target query down to two indistinguishable approximate queries, one of which is the target query. In that case, the query recovery capability of this attack is 50\%. Similarly, a ``33\% query recovery capability" means that the attacker narrows the target query down to three indistinguishable approximate queries. ``25\%" and ``20\%" query recovery capabilities correspond to scenarios where the target query range is narrowed down to four or five indistinguishable approximate queries, respectively. In other words, a query recovery attack with a query recovery capability of 100\% is an accurate query recovery attack; a query recovery attack with a query recovery capability less than 100\% is an approximate query recovery attack. The higher the query recovery capability, the higher the query recovery accuracy.

The experimental results of the third aspect are shown as Fig. \ref{fig:Approximate_Attack}. The $y$-axis represents the proportion of queries that are successfully approximately recovered relative to the total number of queries. Fig. \ref{fig:Approximate_Attack} shows that as the graph becomes sparser, the proportion of successful approximate queries with high recovery capability increases. Although these attacks cannot accurately recover the query, they confine the target query to a small range, posing a threat to PathGES. For example, in real-world scenarios, attackers often have access to additional side information, such as knowledge of the source vertex or commonly queried destinations. By leveraging such information, attackers can further reduce the approximate query set, thereby turning approximate recovery into exact recovery and exacerbating privacy risks.

\begin{figure*}[!htbp]
    \centering
    \begin{minipage}{1.0\textwidth}
    \includegraphics[width=0.9\linewidth]{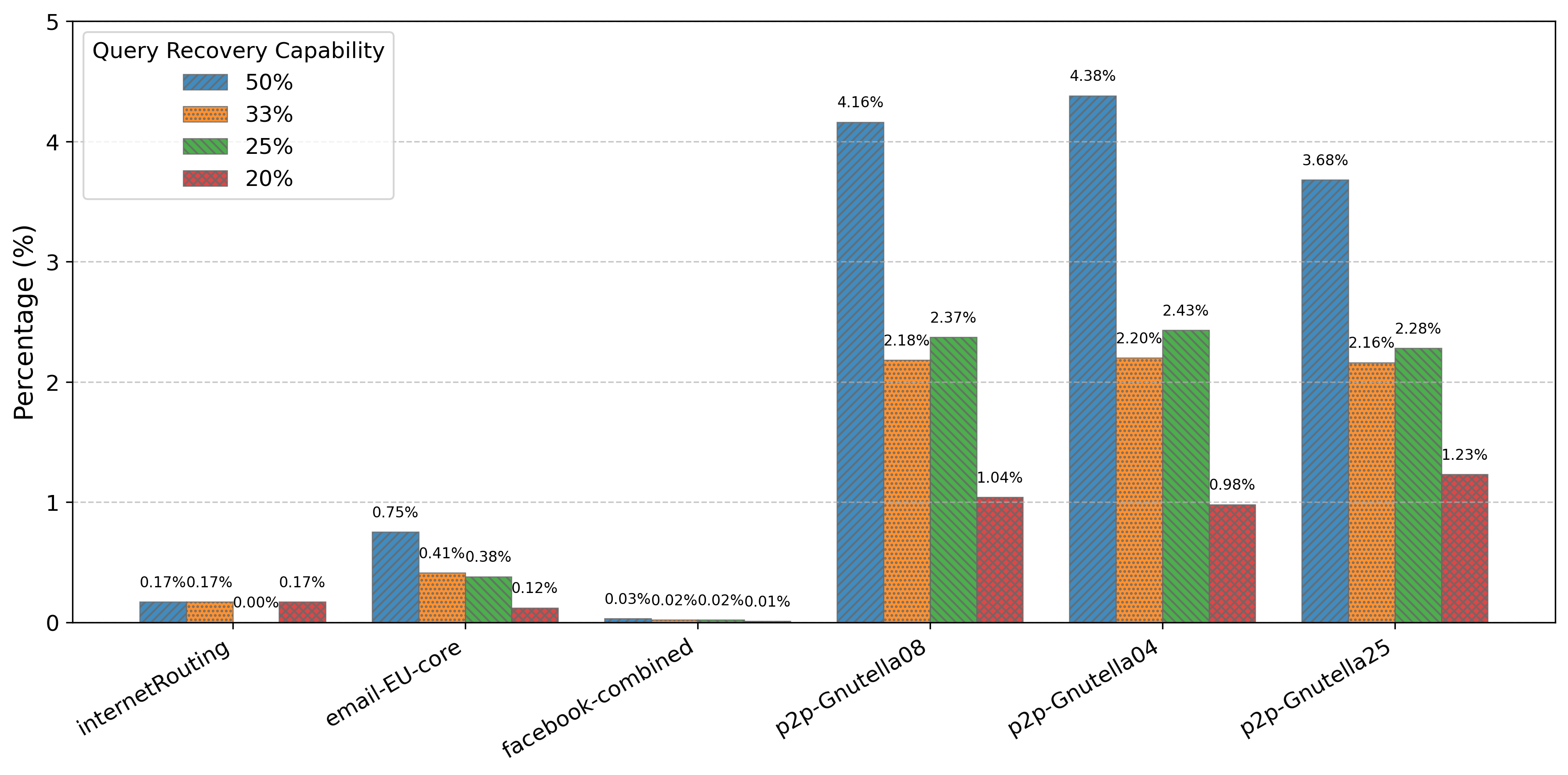}
    \caption{The Situation of Successfully Launching Approximate Attacks on PathGES}
    \label{fig:Approximate_Attack}
    \end{minipage}
\end{figure*}

It should be noted that the graph represented by the Ca-GrQc dataset has an extremely high density. Therefore, the target query cannot be confined to a small range, making it impossible to mount effective approximate query recovery attacks. For this reason, the experiments of the third aspect were conducted on six datasets, excluding the Ca-GrQc dataset.

\section{BlindGES Defense Scheme}
\subsection{Design Goals and Workflow}
BlindGES is designed to accurately retrieve the shortest path between two nodes while effectively resisting the Falzon-Paterson attack. Moreover, it ensures that the adversary cannot infer the length of the shortest path from the query responses.

In BlindGES, we construct a two-level index structure (see Algorithm \ref{Alg_Build_Index}) based on a Merge-and-Divide mechanism (see Algorithm \ref{Alg_MergeDivide}) that we introduced and the outputs $Pset$s of the HLD algorithm for a given graph $G$. In addition, we design a corresponding algorithm (see Algorithm \ref{Alg_RecoverPath}) to recover the final query results from the search results obtained based on the proposed two-level index structure. For ease of description, we take an SDSP tree $T_r$ of $G$ as an example below to elaborate on the core idea of BlindGES in constructing an index structure aimed at eliminating one-to-one mappings. As shown in Table\ref{tab:datasets2}, the proportion of one-to-one mappings in BlindGES is reduced by at least $80\%$ compared with PathGES.

\subsection{Merge-and-Divide Mechanism}
The merging target of the Merge-and-Divide mechanism is the paths of length 1 in $Pset_r$, and path concatenation is performed by introducing a dummy vertex `$q$'. The Merge-and-Divide mechanism adopts different division parameters $l_1$ and $l_2$ for merged paths and non-merged paths. Specifically, merged paths are divided into segments of length $l_1$, with the last segment padded with a dummy vertex `$p$' if necessary; each non-merged path is divided into segments of length $l_2$, and the last segment is also padded with the dummy vertex `$p$' if needed. The two-level index structure constructed by BlindGES consists of two mapping tables, $M_1$ and $M_2$. The decomposed segments are stored in $M_2$. Therefore, BlindGES assigns a corresponding label to each segment. Suppose a segment is the $j$-th segment obtained by BlindGES decomposing a certain path, and assume that the first non-$q$ vertex of this path is $u$ and the last non-$p$ vertex is $v$. Then, the label of this segment is $lab^{(2)}(r, u, v, j)$.

\begin{algorithm}
    \caption{Merge ($AL_{Mer}$).}
    \label{Alg_Merge}
    \LinesNumbered
    \KwIn{$Pset_r$.}
    \KwOut{The set of merged pairs $Pset'_r$.}

    Initialize $Pset'_r \gets \emptyset$; $buffer \gets \emptyset$;\\
    \For {$path \in Pset_r$}
    {
        \If{\text{the length of} $path$ \text{is one}}
        {
            add ``$q$'' $||$ $path$ into $buffer$;
        }
        \ElseIf{ $buffer \neq \emptyset$}
        {
            \If{$buffer$ has more than one path}
            {
                Initialize $merged\_path \gets \emptyset$;\\
                \For{$bpath \in buffer$}
                {
                    $merged\_path \gets merged\_path || bpath$;
                }
                add $merged\_path$ into $Pset'_r$;
                clear $buffer$;
            }\Else
            {
                $single\_path \gets buffer[0]$;\\
                delete $q$ from $single\_path$;\\
                add $single\_path$ into $Pset'_r$;
                clear $buffer$;
            }
        }
        add $path$ into $Pset'_r$;
    }
    \If{$buffer \neq \emptyset$}
    {
        \If{$buffer$ \text{has more than one path}}
        {
            \For{$bpath \in buffer$}
            {
                $merged\_path \gets merged\_path || bpath$;
            }
            add $merged\_path$ into $Pset'_r$;
        }
        \Else{
            $single\_path \gets buffer[0]$;\\
            delete $q$ from $single\_path$;\\
            add $single\_path$ into $Pset'_r$;
        }
    }
    \Return $Pset'_r$;
\end{algorithm}

\begin{algorithm}
    \caption{Divide ($AL_{Div}$).}
    \label{Alg_Divide}
    \LinesNumbered
    \KwIn{$Pset'_r$, division parameters $l_1$ and $l_2$.}
    \KwOut{The set of segment label pairs $GPset'_r$.}

    Initialize $GPset'_r \gets \emptyset$;\\
    \For {$path' \in Pset'_r$}
    {
        $num\_edges \gets |path'| - 1$;\\
        \If{`$q$' \text{is the first vertex of} $path'$}
        {
            $u$ $\gets$ the first non-$q$ vertex of $path'$; \\
            $v$ $\gets$ the last non-$q$ vertex of $path'$; \\
            $num \gets \lfloor num\_edges / l_1 \rfloor$;\\
            \For{$j = 0$ \KwTo $num - 1$}
            {
                $segment$ $\gets$ the subsequence of $path'$, from the $(j\times l_1)$-th vertex to the $((j+)\times l_1-1)$-th vertex;\\
                add ($segment$, $lab^{(2)}_{(r, u, v, j)}$) into $GPset'_r$;
            }
            $segment$ $\gets$ the subsequence of length $|path'|- num \times l_1$ at the end of $path'$;\\
            Pad $segment$ with `$p$'  to form a vertex sequence of length $l_1$;\\
            add ($segment$, $lab^{(2)}_{(r, u, v, num)}$) into $GPset'_r$;
        }
        \Else{
            $u$ $\gets$ the first vertex of $path'$; \\
            $v$ $\gets$ the last vertex of $path'$; \\
            $num \gets \lfloor num\_edges / l_2 \rfloor$;\\
            \For{$j = 0$ \KwTo $num - 1$}
            {
                $segment \gets$ the subsequence of $path'$, from the $(j\times l_2)$-th vertex to the $((j+)\times l_2-1)$-th vertex;\\
                add ($segment$, $lab^{(2)}_{(r, u, v, j)}$) into $GPset'_r$;
            }
            $segment$ $\gets$ the subsequence of length $|path'|- num \times l_2$ at the end of $path'$;\\
            Pad $segment$ with `$p$'  to form a vertex sequence of length $l_2$;\\
            add ($segment$, $lab^{(2)}_{(r, u, v, num)}$) into $GPset'_r$;
        }
    }
    \Return $GPset'_r$;
\end{algorithm}

\begin{algorithm}
    \caption{Merge and Divide ($AL_{MerDiv}$).}
    \label{Alg_MergeDivide}
    \LinesNumbered
    \KwIn{$Pset_r$, division parameters $l_1$ and $l_2$.}
    \KwOut{The set of segment label pairs $GPset'_r$.}

    Initialize $Pset'_r \gets \emptyset$; $GPset'_r \gets \emptyset$;\\
    $Pset'_r \gets AL_{Mer}(Pset_r)$;\\
    $GPset'_r \gets AL_{Div}(Pset'_r, l_1, l_2)$;\\
    \Return $GPset'_r$;
\end{algorithm}

\textbf{Example.} As shown in Fig.\ref{Exep_MergeSplit}, when merging the paths ($v_1$, $v_2$), ($v_1$, $v_3$), ($v_5$, $v_6$), and ($v_8$, $v_9$), we first insert the virtual vertex `$q$' at the beginning of each of them and concatenate them to obtain a single path ($q$, $v_1$, $v_2$, $q$, $v_1$, $v_3$, $q$, $v_5$, $v_6$, $q$, $v_8$, $v_9$). After executing $AL_{Mer}$ (Algorithm \ref{Alg_Merge}), $Pset'_r$ contains two types of paths: the merged paths that contain $q$ (such as the lower path in Fig.\ref{Exep_MergeSplit}) and the non-merged paths (such as the upper path ($u$, $w_6$, $w_5$, $w_4$, $w_3$, $w_2$, $w_1$, $v$) in Fig.\ref{Exep_MergeSplit}). The client needs to set two parameter $l_1$ and $l_2$ to define the length of segment. $l_1$ is an integer multiple of $3$ and $1$ $<$ $l_2$ $<$ $l_1$. Note that, in this paper, we represent a path or segment using a vertex sequence. Therefore, $l_1$ and $l_2$ are number of vertices of a segment. The length of segment is $l_1$ $-$ $1$ or $l_2$ $-$ $1$. The client needs to pad the dummy vertex `$p$' into merged (resp. non-merged) paths to make their length to be an integer multiple of $l_1$ (resp. $l_2$). A critical point is that when splitting a non-merging path, to prevent loss of path information, the starting vertex of the latter segment must overlap with the ending node of the previous segment. As shown in Fig.\ref{Exep_MergeSplit}, the path ($p$, $p$, $u$, $w_6$, $w_5$, $w_4$, $w_3$, $w_2$, $w_1$, $v$) should be split into ($p$, $p$, $u$, ${w_6}$), ($w_6$, $w_5$, $w_4$, $w_3$), and ($w_3$, $w_2$, $w_1$, $v$).

\begin{figure}[htbp]
  \centering
  \begin{minipage}{0.7\textwidth}
      \centering
      \includegraphics[width=\linewidth]{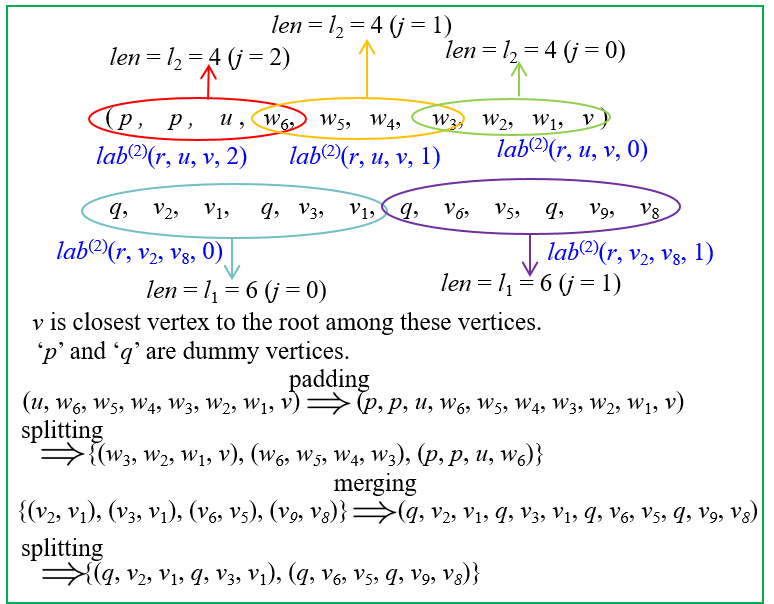}
      \caption{Example for the Merge-and-Divide mechanism.}
      \label{Exep_MergeSplit}
    \end{minipage}
\end{figure}

\subsection{Two-Level Index Structure}
In BlindGES, the two-level index structure consists of $M_1$ and $M_2$, both encrypted to enable secure retrieval. $M_2$, encrypted via the EMM-RH algorithm, is accessed using tokens $tk_{(r, u, v, j)}$ stored within $M_1$. $M_1$ itself, encrypted with EMM-RR, uses query tokens $tk_{(u, v)}$ corresponding to vertex pairs $(u, v)$ to retrieve all relevant canonical fragment tokens associated with the shortest path from $u$ to $v$. Thus, an SPSP query token $tk_{(u, v)}$ facilitates the recovery of all path segments constituting the shortest path between $u$ and $v$ through this encrypted two-level index.

\begin{algorithm}
    \caption{Index Structure Construction ($AL_{Index}$).}
    \label{Alg_Build_Index}
    \LinesNumbered
    \SetKwInput{KwIn}{Input}
    \SetKwInput{KwOut}{Output}
    \KwIn{$\{Pset_r\}_{r\in V(G)}$, division parameters $l_1$ and $l_2$, and $K_2$.}
    \KwOut{($M_1$, $M_2$).}

    Initialize $M_1 \gets \emptyset$, $M_2 \gets \emptyset$;\\
    \For{each $Pset_r$ $\in$ $\{Pset_r\}_{r\in V(G)}$}
    {
        $GPset'_r$ $\gets$ $AL_{MerDiv}(Pset_r, l_1, l_2$);\\
        \For{\text{each} $(segment, lab^{(2)}) \in GPset'$}
        {
            $M_2[lab^{(2)}] \gets segment$\;
            $tk \gets \text{EMM-RH.Token}(K_2, lab^{(2)})$\;
            $w$ $\gets$ $segment[0]$ (the first vertex of $segment$)\;
            \If{$w$ \text{is not} `$q$'}
            {
                $w$ $\gets$ the first non-$p$ vertex of $segment$\;
                $v'$ $\gets$ the third item of $lab^{(2)}$\;
                \While{$w$ \text{is not the last vertex of} $segment$}
                {
                    $M_1[(w, r)] \gets M_1[(v', r)] \cup [tk]$\;
                    \text{shuffle}($M_1[(w, r)]$)\;
                    $w$ $\gets$ the next vertex of $segment$\;
                }
            }
            \Else
            {
                $i \gets 1$\;
                \While{$i < l_1$}
                {
                    $w$ $\gets$ the $i$-th vertex of $segment$\;
                    $v'$ $\gets$ the $(i+1)$-th vertex of $segment$\;
                    \If{$w$ \text{is} `$p$' \text{or} $v'$ \text{is} `$p$'}
                    {
                        break\;
                    }
                    $M_1[(w, r)] \gets M_1[(v', r)] \cup [tk]$\;
                    \text{shuffle}($M_1[(w, r)]$)\;
                    $i$ $\gets$ $i+3$\;
                }
            }
        }
    }
    \Return $(M_1, M_2)$;
\end{algorithm}

\begin{algorithm}
	\caption{Shortest Path Recovery ($AL_{Reco}$). }
	\label{Alg_RecoverPath}
	\SetKwInput{KwIn}{Input}
	\SetKwInput{KwOut}{Output}
	\KwIn{A set of fragments $SegSet$, the source vertex $u$, the destination vertex $v$.}
	\KwOut{the shortest path $path_{(u,v)}$ from $u$ to $v$.}
	
	Initialize $Edges \gets \emptyset$\;
	Initialize $path_{(u,v)}$ as a vertex sequence containing the vertex $u$\;
	\For{\text{each} $segment \in SegSet$}
	{
		\For{$i$ \text{from} $0$ \text{to} $|fragment$|}
		{
			$a \gets fragment[i]$\;
			\If{$a$ \text{is} `$q$'}
			{
				continue\;
			}
			\ElseIf{$a$ \text{is} `$p$'}
			{
				break\;
			}
			\Else{
				$b \gets fragment[i+1]$\;
				\If{$b$ \text{is} `$p$'}
				{
					break\;
				}
				\Else{
					add $(a,b)$ into $Edges$\;
				}
			}
			$i$ $\gets$ $i+1$;
		}
	}
	Set $current \gets u$\;
	\While{$current \neq v$}{
		\While{$(a, b) \in Edges$ is an unvisited edge}
		{
			\If{$a == current$}
			{
				$current \gets b$\;
				add $current$ into $path_{(u,v)}$\;
				break\;
			}
		}
	}
	\Return{$path_{(u,v)}$}\;
\end{algorithm}

Note that, different from PathGES, the segments stored in each entry of $M_1$ are not necessarily continuous segments, which means these segments, when pieced together, do not form a path, effectively avoiding the problem of excessive information leakage caused by returning redundant paths in PathGES. These segments contain the ones that the client truly needs. Therefore, the client can extract the actual required path from the returned segments.

\subsection{Query Processing}
The construction of BlindGES is illustrated in detail in Fig. \ref{BlindGESImp}. In general, BlindGES consists of two phases: the setup phase and the query phase. The setup phase corresponds to the \texttt{BlindGES.KeyGen} algorithm and \texttt{BlindGES.Encrypt} algorithm. The client first generates various parameters by executing \texttt{BlindGES.KeyGen}. Then, the client constructs an encrypted two-level multimap index structure (as shown in Fig. \ref{setup_phase}) and sends it to the server. The query phase includes the \texttt{BlindGES.Token} algorithm, the \texttt{BlindGES.Search} algorithm, and the \texttt{BlindGES.Reveal} algorithm. The client generates a SPSP query by carrying out \texttt{BlindGES.Token} and sends the generated token to the server. The server executes \texttt{BlindGES.Search} based on the received token and the encrypted index structure to obtain the query result and returns it to the client. The client recover the corresponding shortest path from the received query result by executing \texttt{BlindGES.Reveal}.

\begin{figure*}[htbp]
  \centering
  \includegraphics[width=0.9\linewidth]{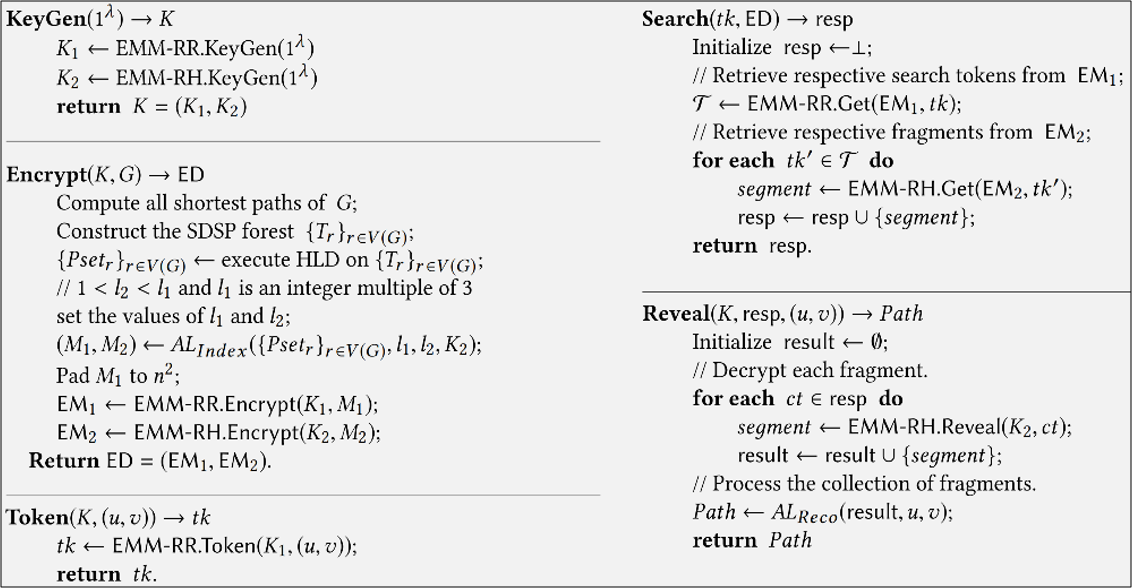}
  \caption{BlindGES Implementation.}
  \label{BlindGESImp}
\end{figure*}

\begin{figure*}[htbp]
  \centering
  \includegraphics[width=0.9\linewidth]{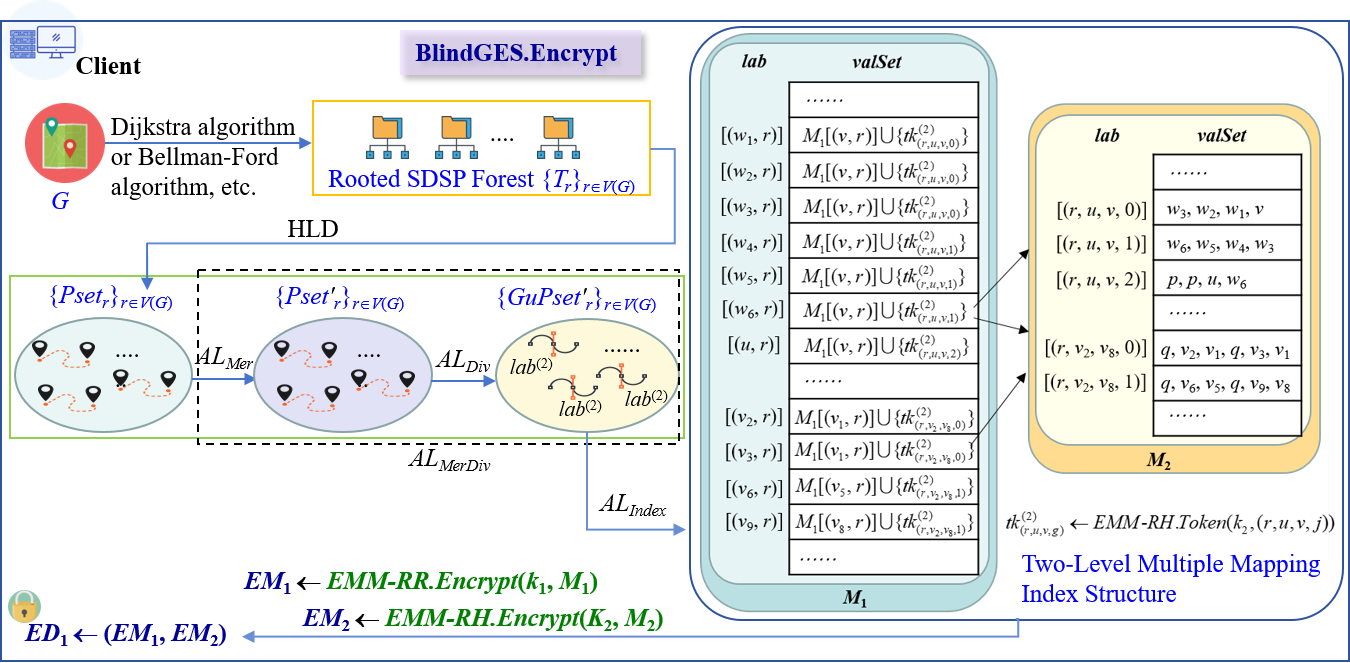}
  \caption{Workflow of setup phase in BlindGES.}
  \label{setup_phase}
\end{figure*}

\subsection{Security Analysis}
\subsubsection{Leakage Functions}
For a given graph $G$, suppose $|G|$ $=$ $n$. For each rooted SDSP tree of $G$, the length of the path is at most $\log n$, the total number of tokens associated with a tree is at most $(n-1)\log n$. Therefore, the total number of entries of $M_1$ is at most $n(n-1)\log n$. In the worst cast, all the edges in each SDSP tree are light edges. After performing HLD, for each tree, the number of paths in $PSet_r$ is at most $n-1$. The lengths of these paths are all $1$. In BlindGES, the lengths of fragments stored in $M_2$ are $l_1$ or $l_2$, where $l_1$ $>$ $l_2$. Therefore, the total number of entries of $M_2$ is at most $\frac{n(n-1)}{l_2}$. To sum up, the upper bounds of the sizes of $M_1$ and $M_2$ are $(n-2)\log n$ and $\frac{n(n-1)}{l_1}$. For the convenience of the proof, we use these upper bounds as $\mathcal{L}^{EMM-RR}_S(M_1)$ and $\mathcal{L}^{EMM-RH}_S(M_2)$, respectively.

Given a multimap, the ($lab$, $valSet$) pairs within it are already determined and the labels in the multimap are all distinct. For $M_1$, one label corresponds to one SPSP query, and different labels correspond to different SPSP queries. Therefore, BlindGES leaks whether two queries are equal (that is $QP^{EMM-RR}_{M_1}$). For $M_2$, one label corresponds one fragment, and different labels correspond to different fragments. Therefore, BlindGES leaks whether two (encrypted) fragments in two responses are equal (that is $QP^{EMM-RH}_{M_2}$). The query pattern (QP) is a $k$ $\times$ $k$ binary matrix $A$ for a sequence of SPSP queries $q_1$, $\cdots$, $q_k$. $A[i][j]$ $=$ $1$ means that the $i$th query was the same as the $j$th query.

Since BlindGES exploits an EMM-RR scheme for $M_1$ and an EMM-RH scheme for $M_2$, BlindGES leaks the leaked information defined by $AP^{EMM-RR}_{M_1}$ and $Vol^{EMM-RH}_{M_2}$. We can use a bipartite graph $BG$ $=$ ($V^{\prime}$, $E^{\prime}$) to describe these leaked information for BlindGES. Given a bipartite graph, its vertices are divided into two disjoint set $V^{\prime}_1$ and $V^{\prime}_2$ such that its every edge connects a vertex from set $V^{\prime}_1$ to a vertex from set $V^{\prime}_2$, and there are no edges within the sets $V^{\prime}_1$ or $V^{\prime}_2$ themselves. The bipartite graph constructed according to $AP^{EMM-RR}_{M_1}$ and $Vol^{EMM-RH}_{M_2}$ is a structure pattern mentioned in \cite{falzon2024PathGES}. We take the label set of $M_1$ as $V^{\prime}_1$ of $BG$ and the label set of $M_2$ as $V^{\prime}_2$ of $BG$. In other words, the vertices in $V^{\prime}_1$ correspond to SPSP queries, and the vertices of $V^{\prime}_2$ correspond to encrypted segments. For a $lab$, $valSet$ pair in $M_1$, $|valSet|$ represents the degree of the vertex corresponding to the label $lab$, which characterizes the number of edges incident to it. The values in the set $valSet$ determine which $|valSet|$ vertices in $V^{\prime}_2$ these $|valSet|$ edges are connected to. Therefore, we can obtain the edge set $E^{\prime}$ of $BG$. Moreover, for a $lab$, $valSet$ pair in $M_2$, since $valSet$ is encrypted, we cannot obtain the values of the set $valSet$ and can only know the size of the ciphertext. The size is taken as the weight of the vertex. There are only two values for the size of all ciphertexts, which correspond to the two grouping sizes $l_1$ and $l_2$ in the segmentation strategy of BlindGES. The degree of a vertex in $V^{\prime}_2$ indicates the number of times the fragment corresponding to the vertex appears in the query response.

In conclusion, the setup leakage function of BlindGES is $\mathcal{L}^{BlindGES}_S(G)$ $=$ $n$, its query leakage function is $\mathcal{L}^{BlindGES}_Q(G)$ $=$ ($A$, $BG$). It should be noted that the above analysis process is reversible. Therefore, $QP^{EMM-RR}_{M_1}$ can be obtained according the query pattern $A$ of $\mathcal{L}^{BlindGES}_Q(G)$, and $AP^{EMM-RR}_{M_1}$, $QP^{EMM-RH}_{M_2}$, and $Vol^{EMM-RH}_{M_2}$ can be obtained according to the structure pattern $BG$ of $\mathcal{L}^{BlindGES}_Q(G)$.

\subsubsection{Security Proof}
\begin{theorem}\label{thm:1}
Suppose \texttt{EMM-RR} to be ($\mathcal{L}^{EMM-RR}_S$, $\mathcal{L}^{EMM-RR}_Q$)-secure and \texttt{EMM-RH} to be ($\mathcal{L}^{EMM-RH}_S$, $\mathcal{L}^{EMM-RH}_Q$)-secure. BlindGES is ($\mathcal{L}_S$, $\mathcal{L}_Q$)-secure according to Definition \ref{Def_1}.
\end{theorem}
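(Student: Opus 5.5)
The plan is a standard hybrid (real--ideal) argument that reduces the security of BlindGES to that of the two encrypted multimaps, in the style of the PathGES proof. We first build a simulator $\mathcal{S}$ that is given only $\mathcal{L}_S(G)=n$ at setup and $(A, BG)$ at query time. From $n$ it computes the public upper bounds on $|M_1|$ and on the total size of $M_2$ used as $\mathcal{L}^{EMM-RR}_S(M_1)$ and $\mathcal{L}^{EMM-RH}_S(M_2)$ in the leakage analysis. It then runs the EMM-RR simulator $\mathcal{S}_1$ and the EMM-RH simulator $\mathcal{S}_2$ on these values to obtain simulated $EM_1$ and $EM_2$.

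For the $i$-th query, $\mathcal{S}$ reads the query pattern $A$ and returns the previously issued simulated token whenever the query repeats. Otherwise it reads the neighbourhood of the corresponding $V'_1$-vertex in $BG$ and derives the leakage it must feed to the sub-simulators:
\begin{itemize}
\item the access pattern $AP^{EMM-RR}_{M_1}$, namely the set of second-level identifiers to return;
\item the query pattern $QP^{EMM-RH}_{M_2}$, from vertex identities in $V'_2$;
\item the volume $Vol^{EMM-RH}_{M_2}$, from the vertex weights, each of which is one of the two values corresponding to $l_1$ and $l_2$.
\end{itemize}
Concretely, $\mathcal{S}$ invokes $\mathcal{S}_2$ once per fresh $V'_2$-vertex to obtain a simulated second-level token, and memoizes it so that repeated fragments reuse the same token. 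It then invokes $\mathcal{S}_1$ with the list of these simulated tokens as the revealed response. The reversibility remark at the end of the leakage analysis is exactly what ensures all four sub-leakages can be computed from $(A,BG)$.

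Indistinguishability follows from three games. $\mathbf{G}_0$ is $\mathbf{Real}$. In $\mathbf{G}_1$ we replace $EM_2$ and all real $M_2$ tokens and responses by the output of $\mathcal{S}_2$. A distinguisher between $\mathbf{G}_0$ and $\mathbf{G}_1$ yields an adversary $\mathcal{B}_2$ against the EMM-RH game: $\mathcal{B}_2$ builds $M_1,M_2$ itself via Algorithms \ref{Alg_MergeDivide} and \ref{Alg_Build_Index}, submits $M_2$ to its challenger, and uses the returned tokens as the values inserted into $M_1$. Since the padding forces every segment to $l_1$ or $l_2$ vertices, the volume leakage of $M_2$ takes only these two values and matches what $BG$ records. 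In $\mathbf{G}_2$ we replace $EM_1$ and its tokens by $\mathcal{S}_1$, with a similar reduction $\mathcal{B}_1$ to EMM-RR security. $\mathbf{G}_2$ is then exactly $\mathbf{Ideal}$ with $\mathcal{S}$, and summing the two negligible advantages proves the statement.

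The main obstacle is the first hybrid. The values stored in $M_1$ are themselves $M_2$ tokens, so the reduction $\mathcal{B}_2$ must obtain all of those tokens before handing $M_1$ to its own EMM-RR encryption. This is awkward in the adaptive EMM-RH game, where tokens are revealed only when queried. I would resolve it by ordering the hybrids the other way: first simulate $EM_1$, so that its values need only be consistent with the access pattern. After that, the $M_2$ tokens appear only in $M_1$ responses at query time, and $\mathcal{B}_2$ can request them adaptively. A second point to check carefully is that the setup leakage can be taken as the fixed bounds rather than the true sizes of $M_1$ and $M_2$. This requires the EMM schemes to permit padding to a public upper bound, and I would state that assumption explicitly.
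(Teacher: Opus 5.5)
Your proposal is correct and follows the same overall strategy as the paper. It uses a hybrid argument that replaces both encrypted multimaps with the EMM-RR and EMM-RH simulators, and feeds them sub-leakages recovered from $(A, BG)$ via the reversibility remark. It also takes the public size bounds as the setup leakage, so that $\mathcal{L}_S(G)=n$.

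The decomposition differs, and the differences favour your version:
\begin{itemize}
\item \textbf{Hop granularity.} The paper uses six hybrids. It first replaces only the \emph{setup} of EMM-RH and then of EMM-RR by the simulators (Hyb 1, Hyb 2), adds a bookkeeping hop that records $A$ and $BG$ (Hyb 3), and only afterwards replaces the \emph{query} phases, again EMM-RH before EMM-RR (Hyb 4, Hyb 5). You use one hop per primitive, covering setup and queries together. That is the granularity at which the adaptive EMM definitions actually give a reduction: a game with a simulated $EM_2$ but real \texttt{EMM-RH.Token} outputs is not covered by the definition.
\item \textbf{Hop order.} You replace EMM-RR first and EMM-RH second, which removes exactly the obstacle you identify. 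The values of $M_1$ are EMM-RH tokens, so replacing EMM-RH first would require the reduction to obtain every second-level token before encrypting $M_1$. Once $EM_1$ is simulated, those tokens are needed only as access-pattern input at query time, and can be requested adaptively from the EMM-RH challenger.
\end{itemize}

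The paper's proof addresses neither this ordering issue nor the need to pad $M_1$ and $M_2$ up to the stated bounds. Your explicit padding assumption is what makes it legitimate to use those bounds as $\mathcal{L}^{EMM-RR}_S$ and $\mathcal{L}^{EMM-RH}_S$.
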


\textbf{Proof}:
We denote the simulators of EMM-RR and EMM-RH as $\mathrm{Sim_{RR}}$ and $\mathrm{Sim_{RH}}$, respectively. We need to construct a stateful simulator $\mathrm{Sim_{BlindGES}}$ to simulate the view of the adversary in the Ideal experiment of Definition \ref{Def_1} by using $\mathrm{Sim_{RR}}$ and $\mathrm{Sim_{RH}}$ and prove the view and BlindGES are computationally indistinguishable. The game hops of the hybrid argument are as follows:

\textbf{Hyb 0}: This is exactly like the actual challenger.

\textbf{Hyb 1}: It is the same as the previous Hyb, except for replacing \texttt{EMM-RH.KeyGen} with $\mathrm{Sim_{RH}}(\mathcal{L}^{EMM-RH}_S)$.

If $\mathcal{A}$ can distinguish between the views before and after the replacement, $\mathcal{A}$ can compromise the adaptive security of EMM-RH with a non-negligible advantage.

\textbf{Hyb 2}: It is the same as the previous Hyb, except for replacing \texttt{EMM-RR.KeyGen} with $\mathrm{Sim_{RR}}(\mathcal{L}^{EMM-RR}_S)$.

If $\mathcal{A}$ can distinguish between the views before and after the replacement, $\mathcal{A}$ can compromise the adaptive security of EMM-RR with a non-negligible advantage.

\textbf{Hyb 3}: It is the same as the previous Hyb, except storing some additional state. Construct the bipartite graph $BG$ and the binary matrix $A$ based on the state information of executing the $k$ queries ($q_1$, $\cdots$, $q_k$). The distributions of this Hyb and the previous Hyb are equal.

\textbf{Hyb 4}: It is the same as the previous Hyb, except for invoking the simulators $\mathrm{Sim_{RH}}$ (using $\mathcal{L}^{EMM-RH}_Q$ derived from $\mathcal{L}^{BlindGES}_Q$) instead of the EMM-RH.

If $\mathcal{A}$ can distinguish between the views before and after the replacement, $\mathcal{A}$ can compromise the adaptive security of EMM-RH with a non-negligible advantage.

\textbf{Hyb 5}: It is the same as the previous Hyb, except for invoking the simulators $\mathrm{Sim_{RR}}$ (using $\mathcal{L}^{EMM-RR}_Q$ derived from $\mathcal{L}^{BlindGES}_Q$) instead of the EMM-RR.

If $\mathcal{A}$ can distinguish between the views before and after the replacement, $\mathcal{A}$ can compromise the adaptive security of EMM-RR with a non-negligible advantage.

\textbf{Hyb 6}: It is the same as the previous Hyb, except for invoking the simulators $\mathrm{Sim_{BlindGES}}$ using $\mathcal{L}^{BlindGES}_Q$ instead of $\mathrm{Sim_{RH}}$ and $\mathrm{Sim_{RR}}$. $\mathcal{L}^{BlindGES}_Q$ are equal to the tuple ($\mathcal{L}^{EMM-RH}_Q$, $\mathcal{L}^{EMM-RR}_Q$). Thus this Hyb and the previous Hyb are computationally indistinguishable.

The distribution \textbf{Hyb 6} is identical to that of $\mathrm{Sim_{BlindGES}}$. Therefore, $\mathcal{A}$ can distinguish the views in the Real and Ideal worlds with a negligible advantage, BlindGES is ($\mathcal{L}_S$, $\mathcal{L}_Q$)-secure.

\subsubsection{Cryptanalysis}
\textbf{Additional information leakage caused during the use of the EMM-RH process:} PathGES causes additional information leakage during the process of using a ($\mathcal{L}^{EMM-RH}_S$, $\mathcal{L}^{EMM-RH}_Q$)-secure EMM-RH scheme, where $\mathcal{L}^{EMM-RH}_Q$ $=$ ($QP^{EMM-RH}$, $Vol^{EMM-RH}$), $Vol^{EMM-RH}(M_2, lab)$ $=$ $|M_2[lab]|$. The value of $M_2[lab]$ is the ciphertext of a canonical path fragment. The canonical fragment encoding method used in PathGES makes the length of the path fragments stored in $M_2$ exhibit a regularity, that is, being a power of $2$. Therefore, the size of the corresponding ciphertext also shows a certain regularity. By observing all the ciphertexts in $M_2$ or listening to multiple responses, an adversary can consider the smallest ciphertext size as the ciphertext size corresponding to a path of length $1$. Then, by observing the target response, the adversary can infer the path length of the query path based on the size of the ciphertext contained in the target response. The standardization (i.e., the regularity shown) of path fragments brought about by the canonical fragment encoding makes it possible to conduct side-channel attacks on PathGES. We have verified this through experiments as well. This information leakage is not defined in $\mathcal{L}^{PathGES}_Q$. Therefore, we consider that PathGES has vulnerabilities to side-channel attacks. The Merge-and-Divide mechanism of BlindGES makes the lengths of the fragments stored in $M_2$ be fixed at two lengths, $l_1$ and $l_2$. Therefore, BlindGES will not cause additional information leakage during the use of a EMM-RH scheme.

\textbf{Information leakage caused by the stored fragments:} PathGES pads a path with a length of $l$ to a path of $2^{\lfloor \log n \rfloor + 1}$. Then, $\lfloor \log n \rfloor + 2$ canonical path fragments with lengths of $2^j$ ($j$ $=$ $0$, $1$, $\cdots$, $\lfloor \log n \rfloor + 1$) are extracted, every time starting from the vertex (assumed to be $v$) closest to the root on this path. Suppose the length of this path is $2^{\lfloor \log n \rfloor}$ $+$ $k$, where $1$ $\leq$ $k$ $\leq$ $2^{\lfloor \log n \rfloor + 1}$ $-$ $2^{\lfloor \log n \rfloor}$. The ($2^{\lfloor \log n \rfloor}$ $+$ $j$)-th vertex on this path is $v_j$, ($j$ $=$ $1$, $\cdots$, $k$), and assume that the token corresponding to the canonical path fragment with a length of $2^{\lfloor \log n \rfloor + 1}$ for this path is $tk$. By retrieving $M_1$, it is found that the token sets corresponding to the SPSP queries ($v$, $v_j$) ($j$ $=$ $1$, $\cdots$, $k$) all contain $tk$. For the query ($v$, $v_1$), it obtains additional information: the shortest path information of ($v$, $v_{j^{\prime}}$) ($j^{\prime}$ $=$ $2$, $\cdots$, $k$). PathGES uses the multi-path fragment information contained in the canonical fragments to achieve a one-to-many mapping between tokens and paths. However, the canonical path fragments contain multiple path fragments with a sub-path structure (i.e., short paths are sub-paths of long paths), leading to additional information leakage. The Merge-and-Divide mechanism of BlindGES ensures that the fragments finally stored in $M_2$ are not necessarily all fragments with a sub-path structure, thus reducing the information leakage in the above-mentioned situation. Indeed, this information leakage will not pose an additional threat to the application scenarios where the querier is the graph owner. However, in application scenarios where the querier is not the graph owner, this information leakage will undermine the legitimate rights and interests of the graph owner, so it must be considered. Therefore, BlindGES is more likely to be extended to adapt to new application environments.

\subsection{Empirical Evaluation}
We conducted experiments on a workstation equipped with an NVIDIA GeForce RTX 5060 Ti GPU and an AMD EPYC 7K62 48-core processor (96 threads) with 251 GB of memory. The experiments were implemented using Python version 3.10, along with the following libraries:
\begin{itemize}
    \setlength{\itemsep}{0pt}
    \setlength{\parsep}{0pt}
    \setlength{\parskip}{0pt}
    \item \textbf{NetworkX 3.3}: Used for graph representation and processing. We leveraged NetworkX to construct and manipulate graph structures and compute shortest paths.
    \item \textbf{Cryptography 42.0.8}: Used to perform cryptographic operations. Our scheme adopts AES in CBC mode for symmetric encryption, with both block size and key length set to 16 bytes. Hashing operations use SHA-256, and search tokens are generated using HMAC based on SHA-256.
\end{itemize}

In the experiments, all encryption processes were parallelized to fully utilize the device's 48 physical cores (96 logical threads) of the AMD EPYC 7K62 processor, thereby maximizing computational efficiency. All experiments were conducted on this platform, leveraging Python's parallel computing capabilities to accelerate the encryption procedures. The configuration of the experimental environment and the hardware conditions ensured the reliability and efficiency of the experimental results, providing a solid basis for validating the performance of our comparative evaluations. BlindGES was evaluated using the same social network datasets as those selected by Ghosh et al.\cite{ghosh2021efficient} and Falzon et al. \cite{falzon2024PathGES}, ensuring consistency in the experimental setting. The more details of datasets can see Section \ref{DataSets}. The source code used to implement and evaluate BlindGES is publicly
available at: https://anonymous.4open.science/r/BlindGES/.

\subsubsection{Performance Comparison}
When running the BlindGES scheme, the segment length parameters for the two types of paths are configured as follows: $l_1$ is set to 9, which corresponds to a merged path length of 8, and $l_2$ is set to 5, which corresponds to a non-merged path length of 4. In the setup phase, we used Python's \texttt{os.path.getsize} function to calculate the sizes of the two multimaps, \( M_1 \) and \( M_2 \), and their encrypted data, \( EM_1 \) and \( EM_2 \), for the six datasets. The setup phase also included the time for constructing the multimaps and encryption.

For these six datasets, for the smallest dataset, internetRouting (\(n = 35\)), the total size of the two multimaps \( M_1 \) and \( M_2 \) was 200KB, and the total size of the encrypted data \( EM_1 \) and \( EM_2 \) sent to the server was 504KB, with the setup phase taking 1125 milliseconds. In contrast, for the largest dataset, p2p-Gnutella04 (\(n = 10876\)), the total size of \( M_1 \) and \( M_2 \) reached 37.09GB, and the total size of the encrypted data \( EM_1 \) and \( EM_2 \) was 66.6GB, with the setup phase taking 1.23 hours. In comparison, when running the PathGES scheme under the same conditions, the total size of the two multimaps \( M_1 \) and \( M_2 \) for the internetRouting dataset (\(n = 35\)) was 233KB, and the total size of the encrypted data \( EM_1 \) and \( EM_2 \) was 835KB, with the setup phase taking 1247 milliseconds. For the p2p-Gnutella04 dataset (\(n = 10876\)), the total size of \( M_1 \) and \( M_2 \) increased to 42.08GB, and the total size of the encrypted data \( EM_1 \) and \( EM_2 \) was 97.99GB, with the setup phase taking as long as 2.4 hours. Compared with PathGES, BlindGES demonstrates a clear advantage on complex graphs such as facebook-combined, p2p-Gnutella08, and p2p-Gnutella04. On these three datasets, the setup time of BlindGES is only half that of PathGES, and the total size of the two encrypted databases sent to the server is merely $59\% \sim 67\%$ of that of PathGES. This improvement stems from the fact that PathGES stores a large number of duplicate canonical fragments, resulting in substantial storage redundancy. In contrast, BlindGES adopts a grouping strategy that partitions path fragments into two non-overlapping sets, thereby significantly reducing the storage overhead.The specific details of the two schemes during the setup stage are shown in Table \ref{tab:performance of BlindGES and PathGES in the setup phase}.

\begin{table*}
  \caption{The performance of BlindGES and PathGES in the setup phase.}
  \label{tab:performance of BlindGES and PathGES in the setup phase}
  \footnotesize
  \begin{tabular}{ccccccccc}
    \toprule
    PathGES & Data & $M_1$  & $EM_1$ & $M_2$ & $EM_2$ & time to comp MMs & enc time & setup time \\
    \midrule
    &internetRouting   & 147KB   & 266KB    & 86KB    & 569KB    & 658ms   & 589ms    &1247ms\\
    &Ca-GrQc           & 246KB   & 455KB    & 147KB   & 975KB    & 821ms   & 674ms    &1495ms\\
    &email-EU-core     & 188MB   & 285.5MB  & 84.1MB  & 466.2MB  & 32s     & 55s      & 87s\\
    &facebook-combined & 3.64GB  & 5.15GB   & 1.53GB  & 7.63GB   & 6.8min  & 14.3min  &21.1min\\
    &p2p-Gnutella08    & 9.59GB  & 13.32GB  & 3.76GB  & 18.57GB  & 15.6min & 35.2min  &50.8min\\
    &p2p-Gnutella04    & 30.36GB & 42.37GB  & 11.72GB & 55.62GB  & 42.9min & 100.8mmin &143.7min\\
  \bottomrule
  \toprule
    BlindGES & Data & $M_1$  & $EM_1$ & $M_2$ & $EM_2$ & time to comp MMs & enc time & setup time \\
    \midrule
    &internetRouting   & 147KB   & 266KB     & 53KB   & 238KB   & 599ms   & 526ms   &1125ms\\
    &Ca-GrQc           & 246KB   & 455KB     & 90KB   & 389KB   & 662ms   & 559ms   &1221ms\\
    &email-EU-core     & 188MB   & 285.5MB   & 46.8MB & 195.7MB & 22.8s   & 28.0s   &50.8s\\
    &facebook-combined & 3.64GB  & 5.15GB    & 0.83GB & 2.98GB  & 3.6min  & 6.4min  &10min\\
    &p2p-Gnutella08    & 9.59GB  & 13.32GB   & 2.19GB & 8.02GB  & 9.7min  & 16.3min &26min\\
    &p2p-Gnutella04    & 30.36GB  & 42.37GB    & 6.73GB & 24.23GB & 25.6min & 48.3min &73.9min\\

    \bottomrule
\end{tabular}
\end{table*}

\begin{figure}[htbp]
  \centering
  \begin{subfigure}[b]{0.48\linewidth}
      \centering
      \includegraphics[width=\textwidth]{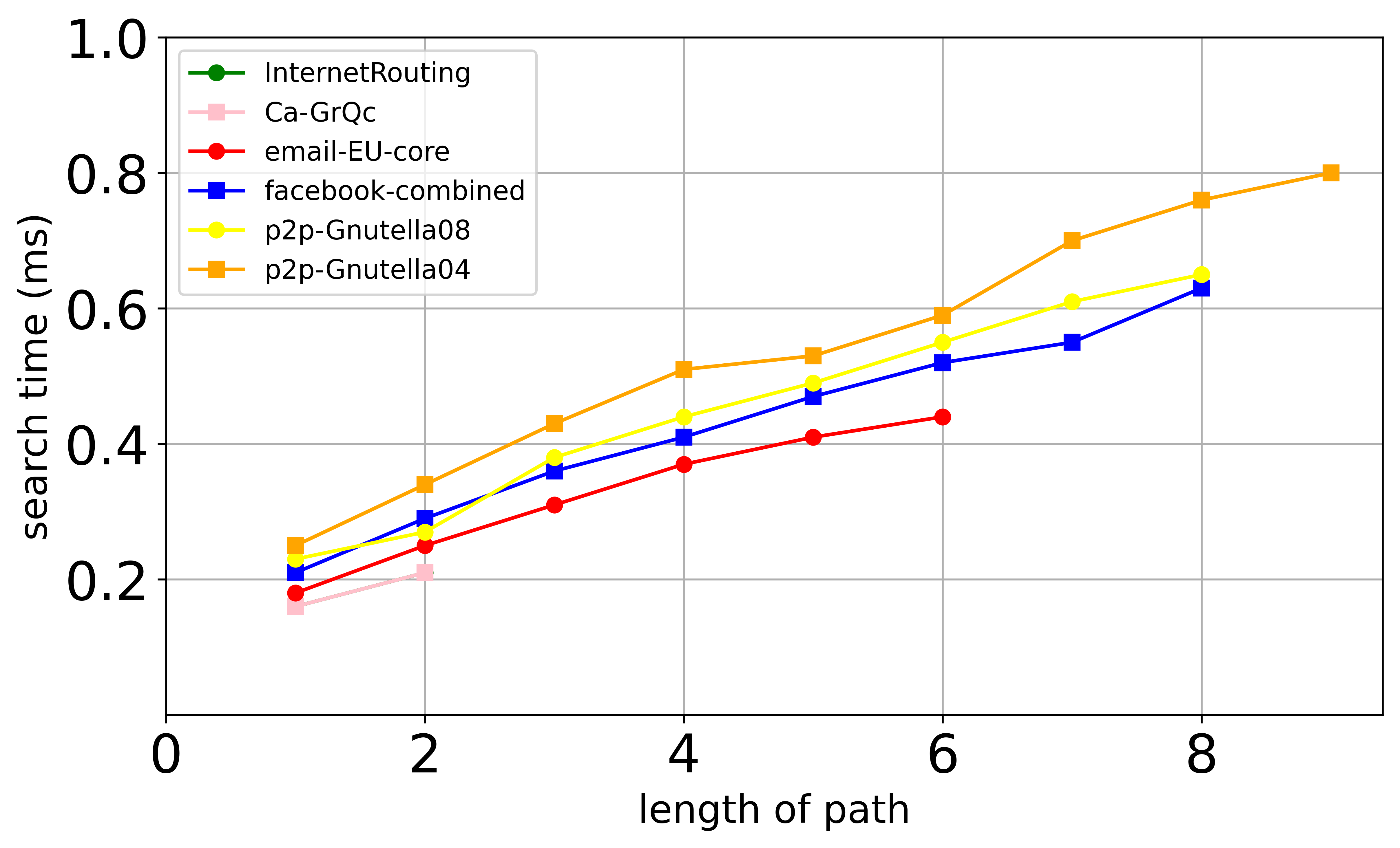}
      \caption{The search time of PathGES }
  \end{subfigure}
  \begin{subfigure}[b]{0.48\linewidth}
      \centering
      \includegraphics[width=\textwidth]{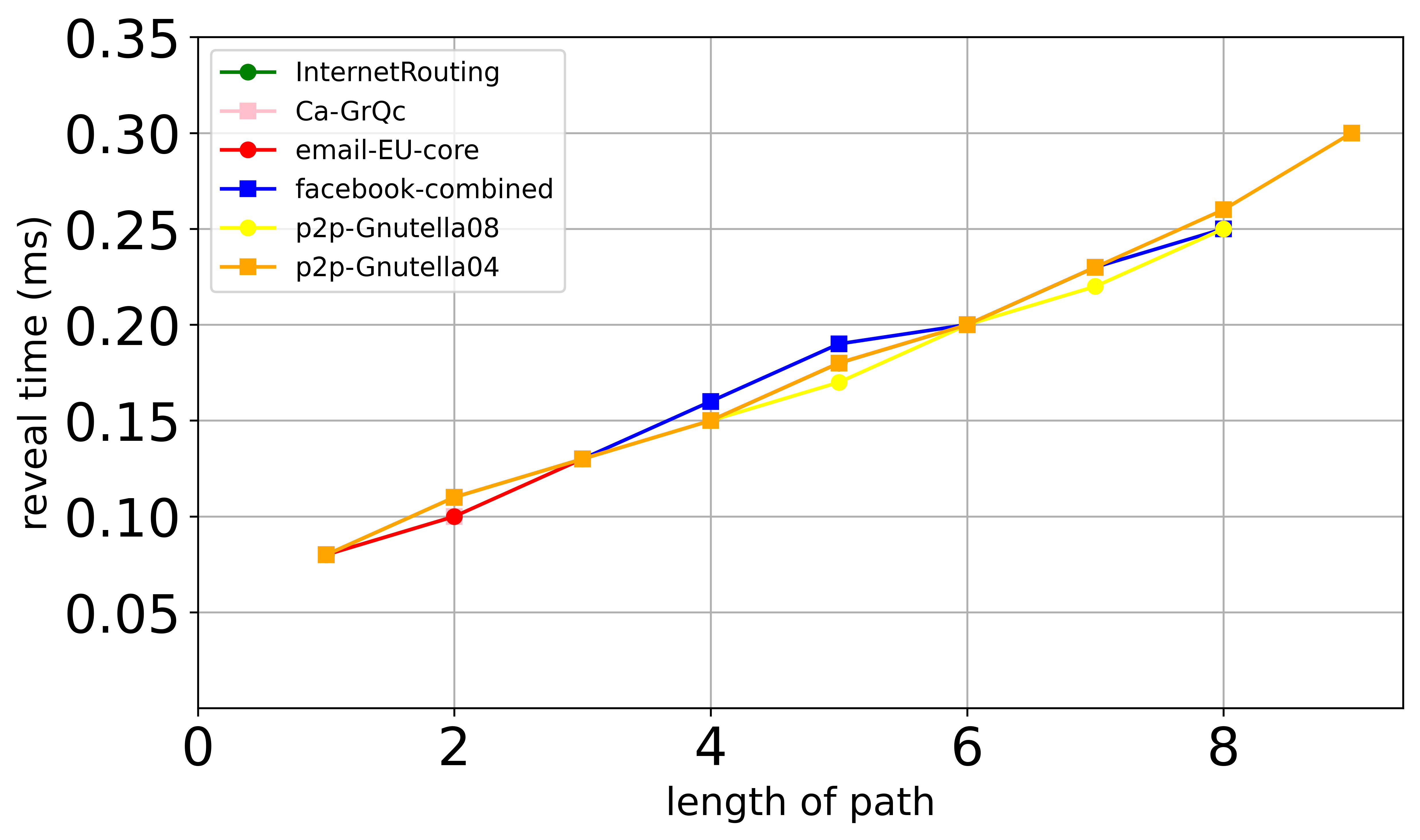}
      \caption{The reveal time of PathGES }
  \end{subfigure}
  \begin{subfigure}[b]{0.48\linewidth}
      \centering
      \includegraphics[width=\textwidth]{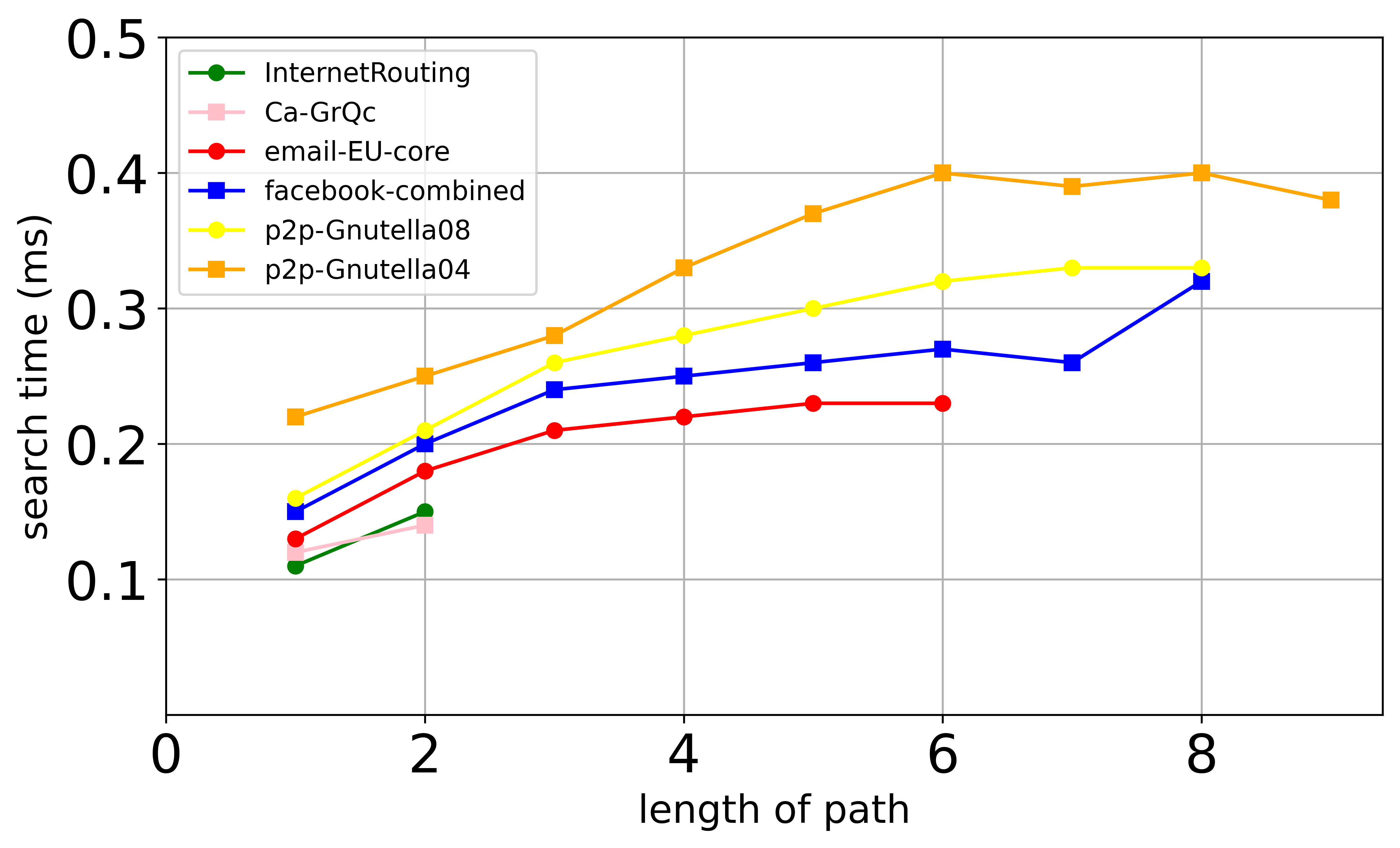}
      \caption{The search time of BlindGES }
  \end{subfigure}
  \begin{subfigure}[b]{0.48\linewidth}
      \centering
      \includegraphics[width=\textwidth]{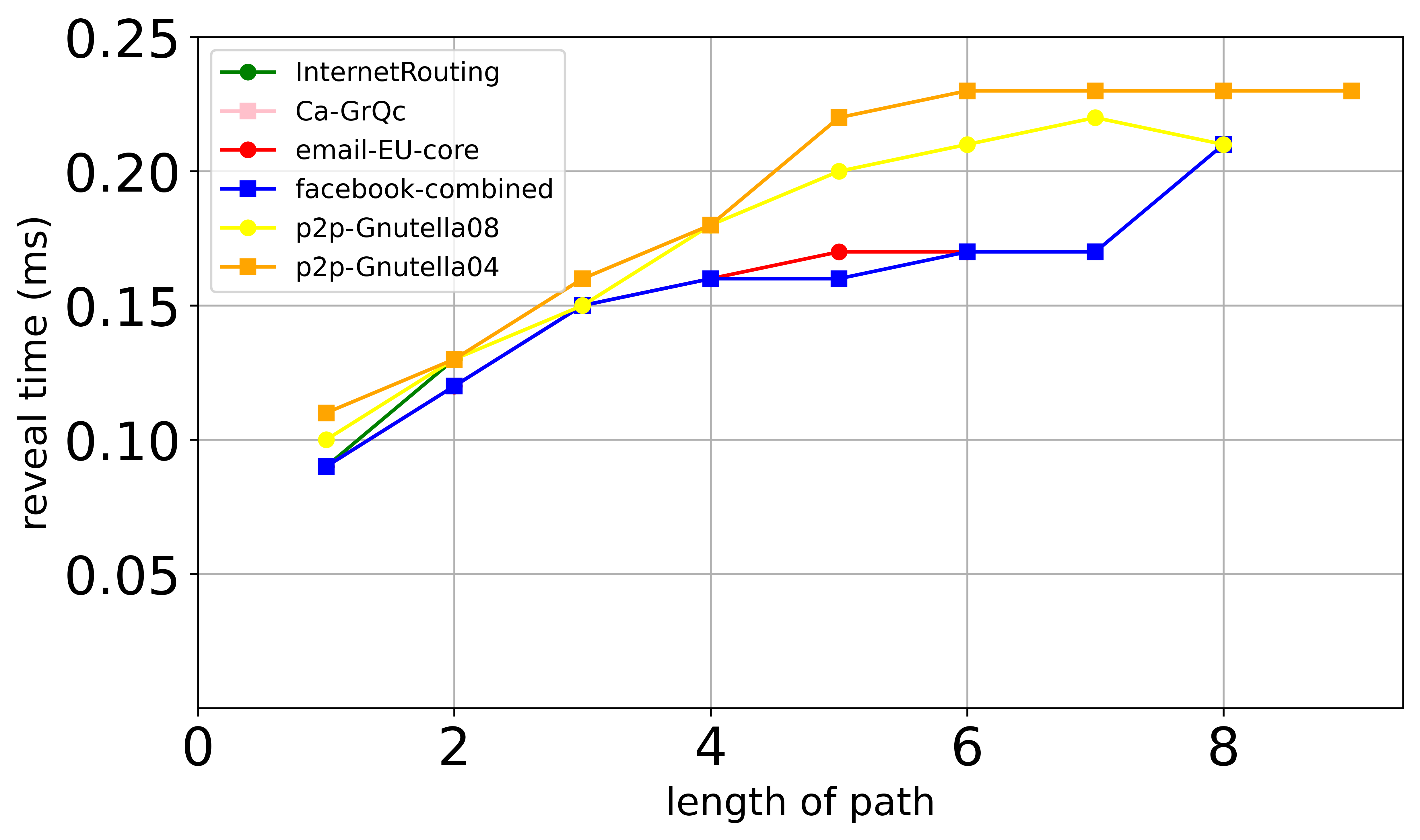}
      \caption{The reveal time of BlindGES }
  \end{subfigure}
  \caption{The search time and reveal time in PathGES and BlindGES}
  \label{Fig_search_time_and_reveal_time}
\end{figure}

\begin{figure}[htbp]
  \centering
  \begin{subfigure}[b]{0.48\linewidth}
      \centering
      \includegraphics[width=\linewidth]{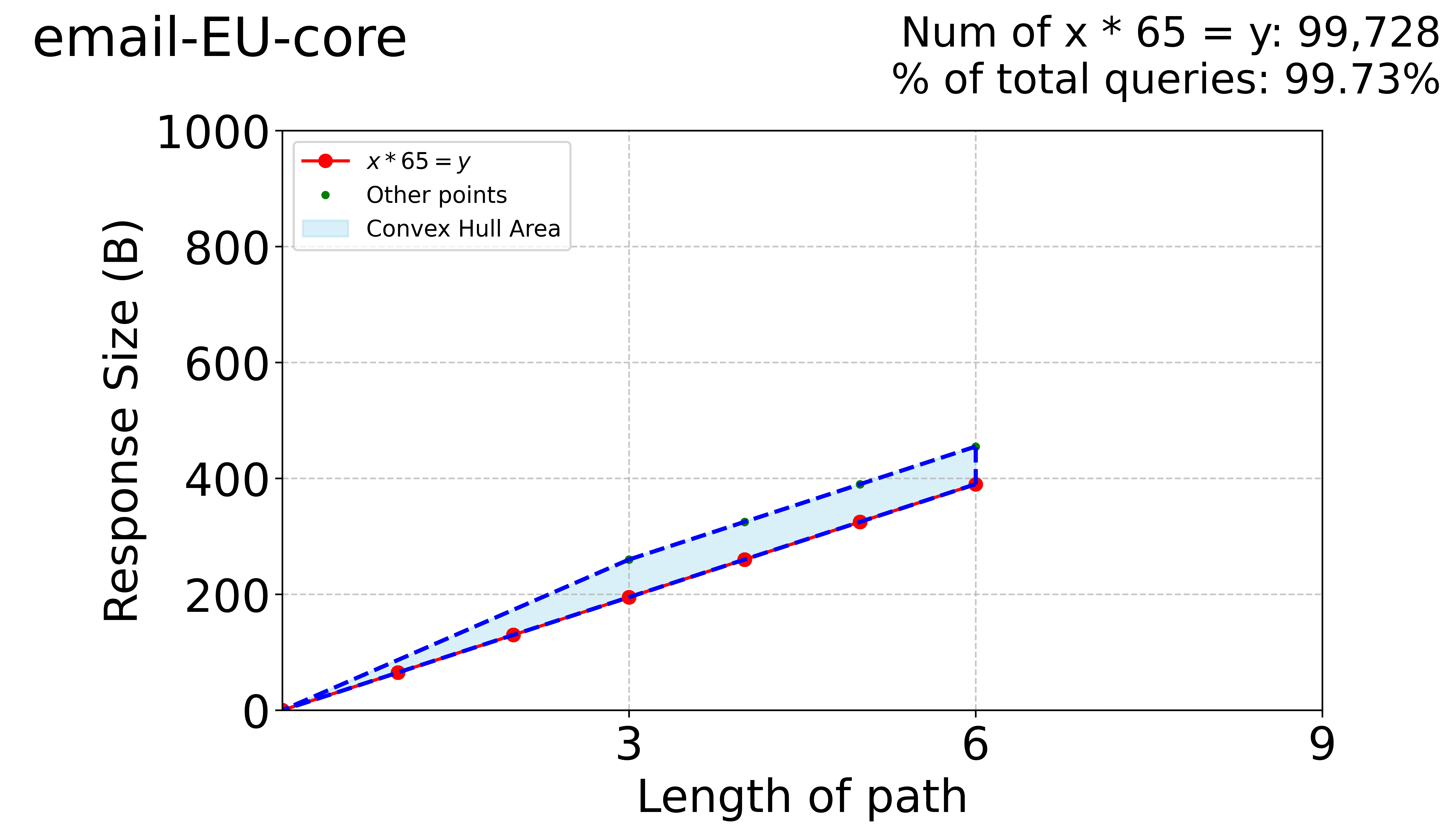}
      \caption{}
  \end{subfigure}
  \begin{subfigure}[b]{0.48\linewidth}
      \centering
      \includegraphics[width=\linewidth]{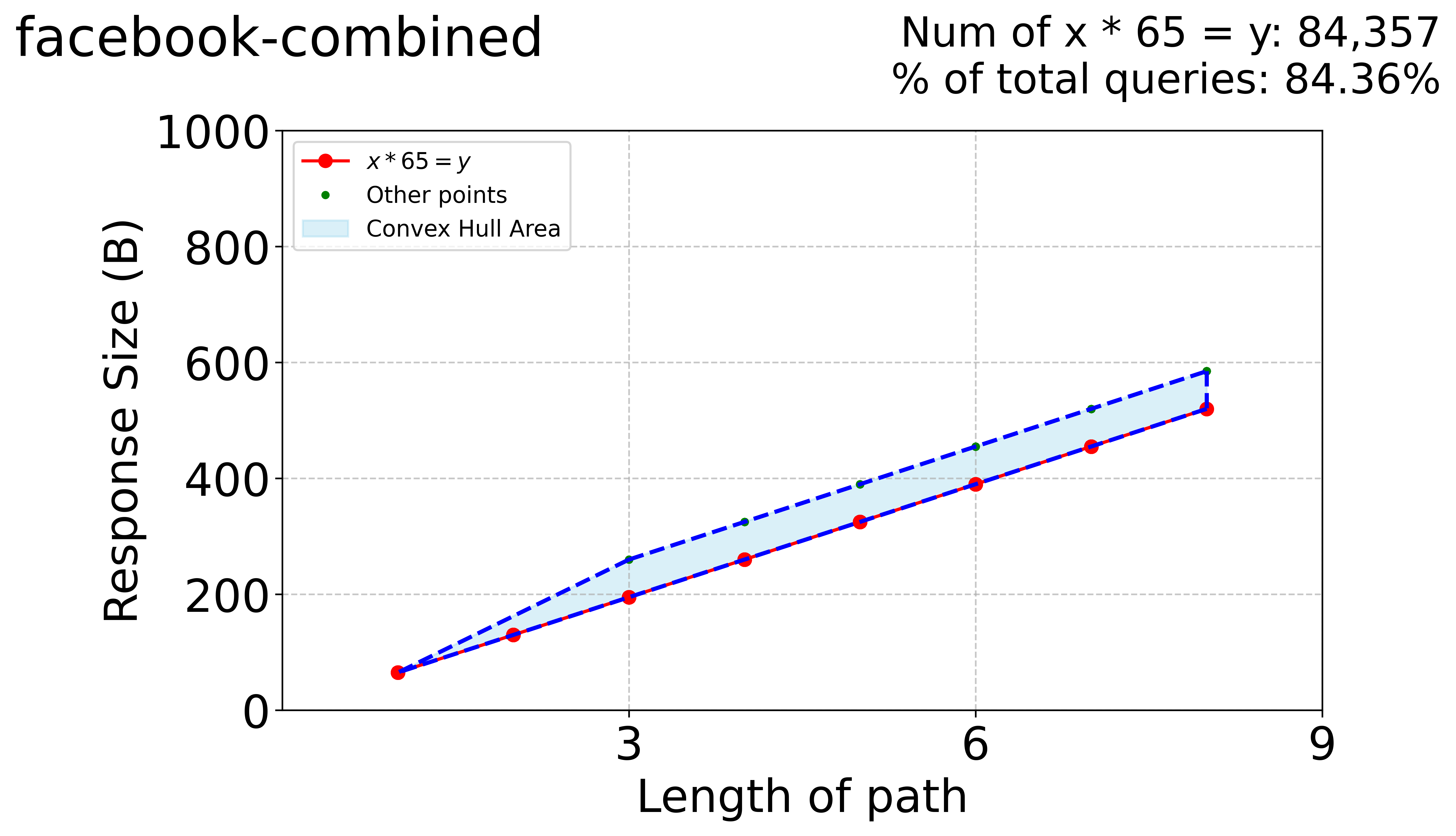}
      \caption{}
  \end{subfigure}
  \begin{subfigure}[b]{0.48\linewidth}
      \centering
      \includegraphics[width=\linewidth]{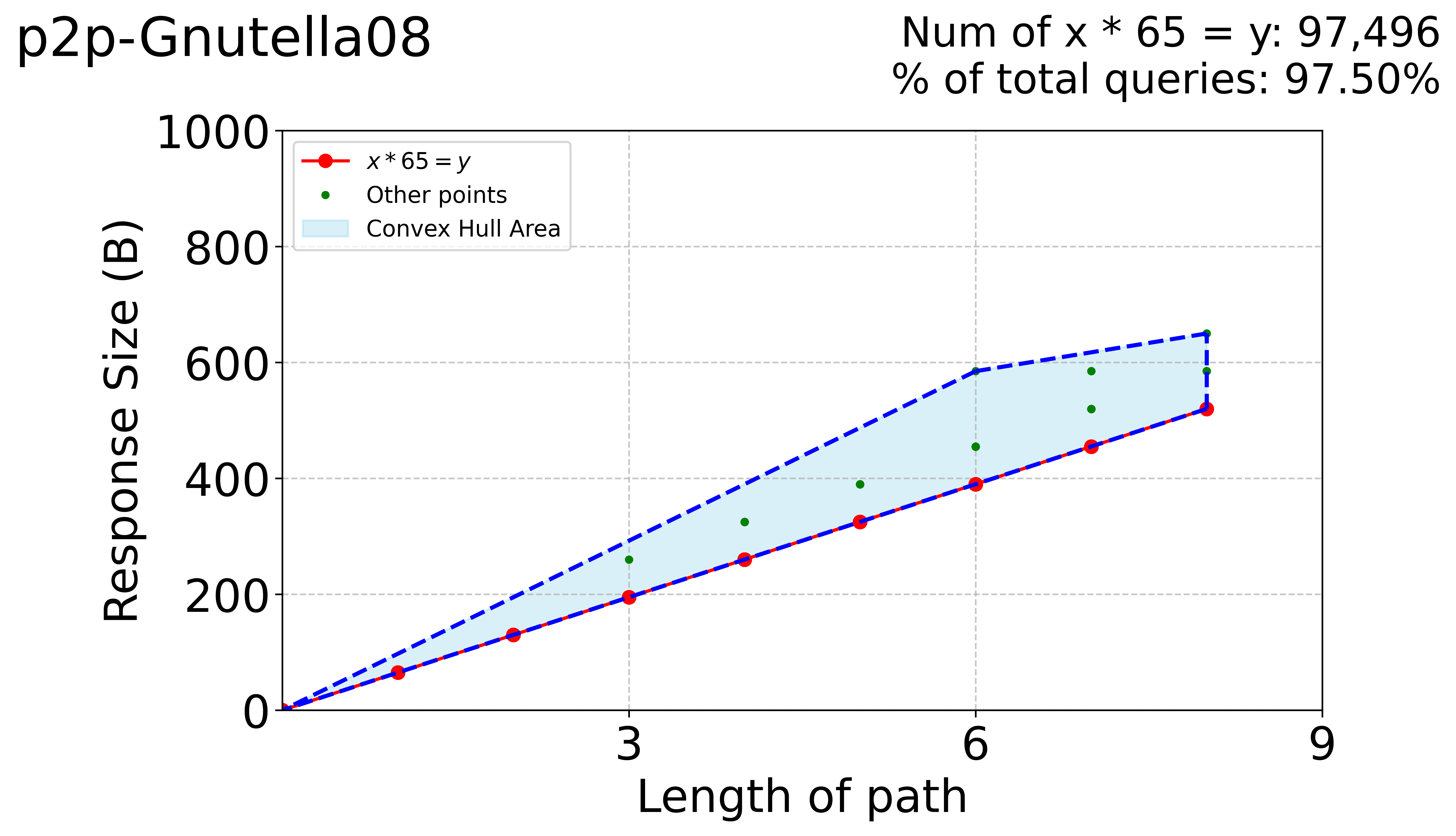}
      \caption{}
  \end{subfigure}
  \begin{subfigure}[b]{0.48\linewidth}
      \centering
      \includegraphics[width=\linewidth]{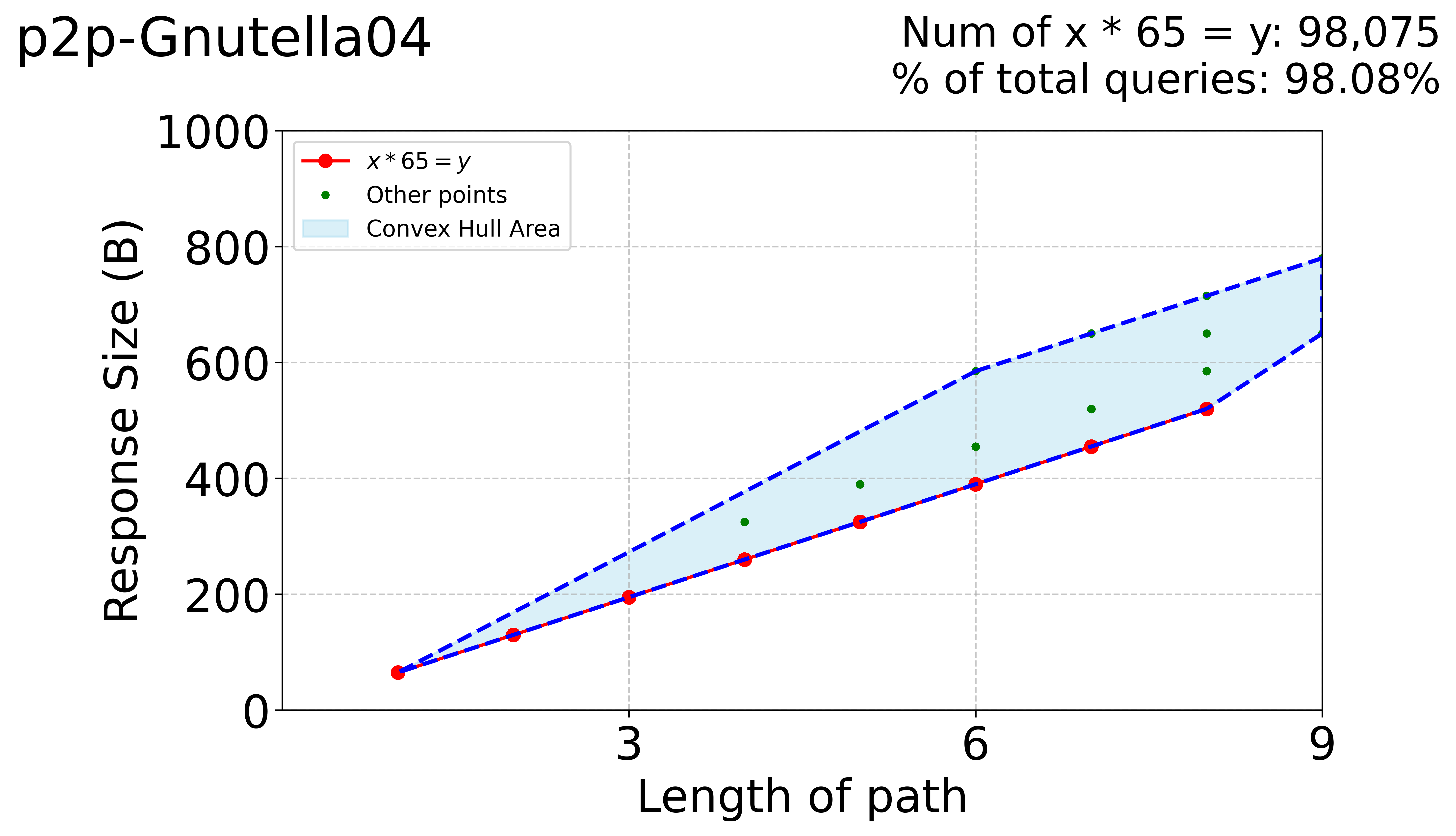}
      \caption{}
  \end{subfigure}
  \caption{Relationship between shortest path length and response size in PathGES. The dots in the shaded area represent possible response values, and points on the line $S = 65 \cdot L$ indicate queries whose path lengths can be directly inferred. }
  \label{Fig resp of path in PathGES}
\end{figure}

We measured the average query response time and reveal time of BlindGES and PathGES for each shortest-path length. As shown in the figure, across the six real-world datasets, both the response time and reveal time of BlindGES remain at the millisecond level. In particular, the query time does not exceed 0.4 milliseconds and is only half that of PathGES, while the reveal time is also slightly faster than that of PathGES.

\subsubsection{Security Comparison}
In this section, we compare BlindGES and PathGES in terms of the relationship between the shortest path length and the response. We argue that queries satisfying \( S = \alpha \cdot L \) will directly expose the length of the shortest path. Here, \( S \) represents the response returned by the query, \( \alpha \) represents the response required for one edge in the shortest path, and \( L \) represents the length of the shortest path. For all responses in the PathGES scheme, the minimum value is \( S_{\text{min}} = 65B \). Based on its multimap construction, an adversary can determine that a shortest path of length 1 requires a response of 65B, meaning that the response for one edge in the path fragment requires \( \alpha = 65 \). If the response size \( S \) returned by the query satisfies \( S = 65 \cdot L \), it implies that the shortest path length of the query can be directly inferred. As shown in Fig.\ref{Fig resp of path in PathGES}, the points in the shaded area represent the possible response values corresponding to the shortest paths of different lengths. , and points on the line \( S = 65 \cdot L \) represent queries where the shortest path length can be inferred. In the statistics of 100,000 random shortest path queries, at least 84\% of the queries' responses are linearly related to the shortest path length. This means that at least 84\% of the queries will directly expose the shortest path length information.

In contrast, our scheme is more effective in reducing the leakage of shortest path information in the response. This is because our scheme hides the response returned for the shortest path of length 1. In the BlindGES parameter setting, the minimum path fragment length mapped by each token is set to 4, and the minimum response \( S_{\text{min}} = 81B \). However, for the adversary, they can only observe that the shortest path fragment requires 81B, without knowing how many edges are included in that fragment. Therefore, when the parameters change, even if the minimum response \( S_{\text{min}} \) is known, the adversary still finds it difficult to accurately deduce the true values of \( \alpha \) and \( L \). As shown in Fig.\ref{Fig resp of path in BlindGES}, The points in the shaded area represent the possible response values corresponding to the shortest paths of different lengths. Points on the line \( S = 81 \cdot L \) represent the inference based on a 81B response being considered as the response for one edge. In this case, fewer than 1\% of the queries are likely to be accidentally guessed in the datasets.

\begin{figure}[htbp]
  \centering
  \begin{subfigure}[b]{0.48\linewidth}
      \centering
      \includegraphics[width=\linewidth]{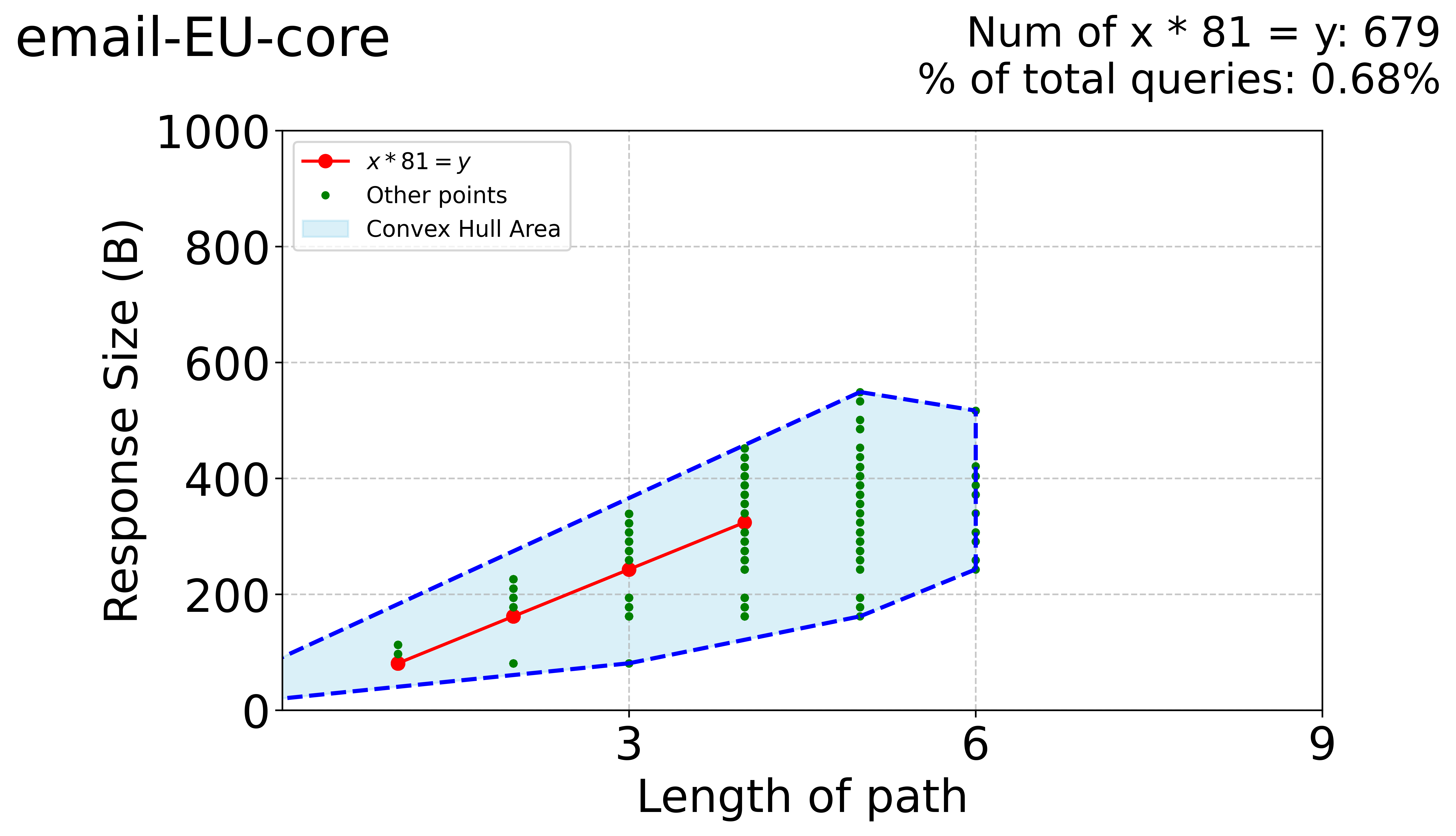}
      \caption{}
  \end{subfigure}
  \begin{subfigure}[b]{0.48\linewidth}
      \centering
      \includegraphics[width=\linewidth]{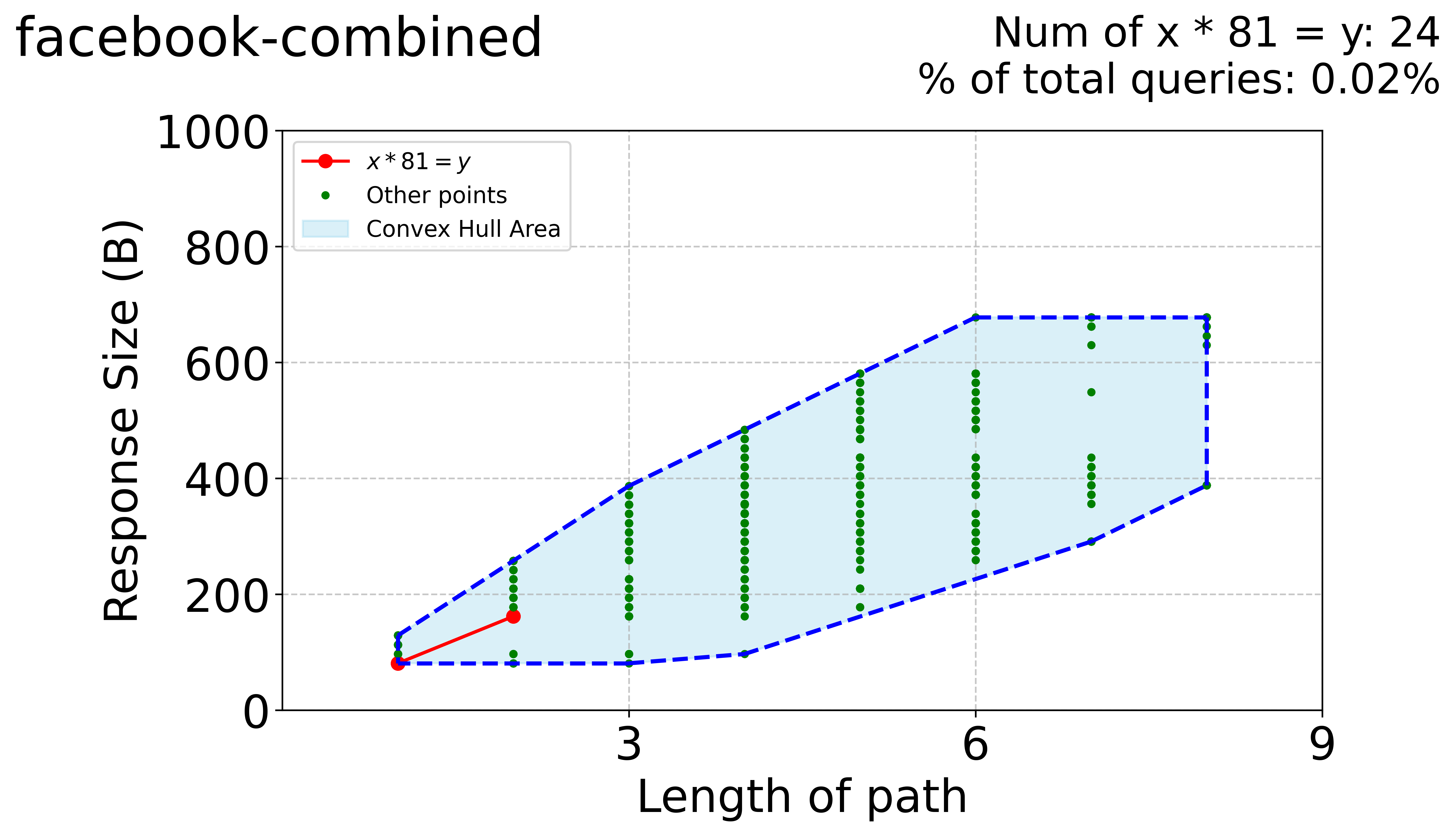}
      \caption{}
  \end{subfigure}
  \begin{subfigure}[b]{0.48\linewidth}
      \centering
      \includegraphics[width=\linewidth]{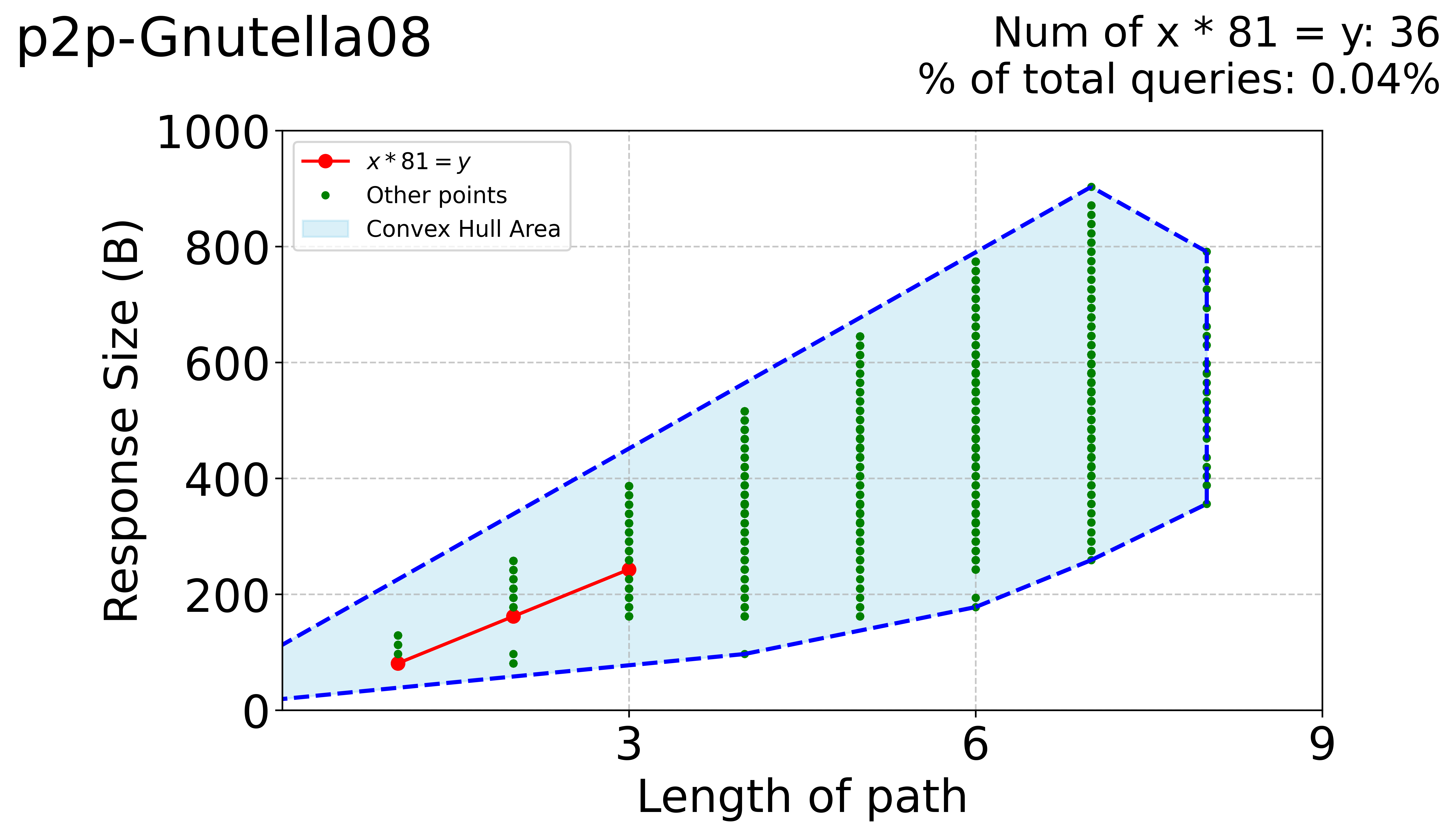}
      \caption{}
  \end{subfigure}
  \begin{subfigure}[b]{0.48\linewidth}
      \centering
      \includegraphics[width=\linewidth]{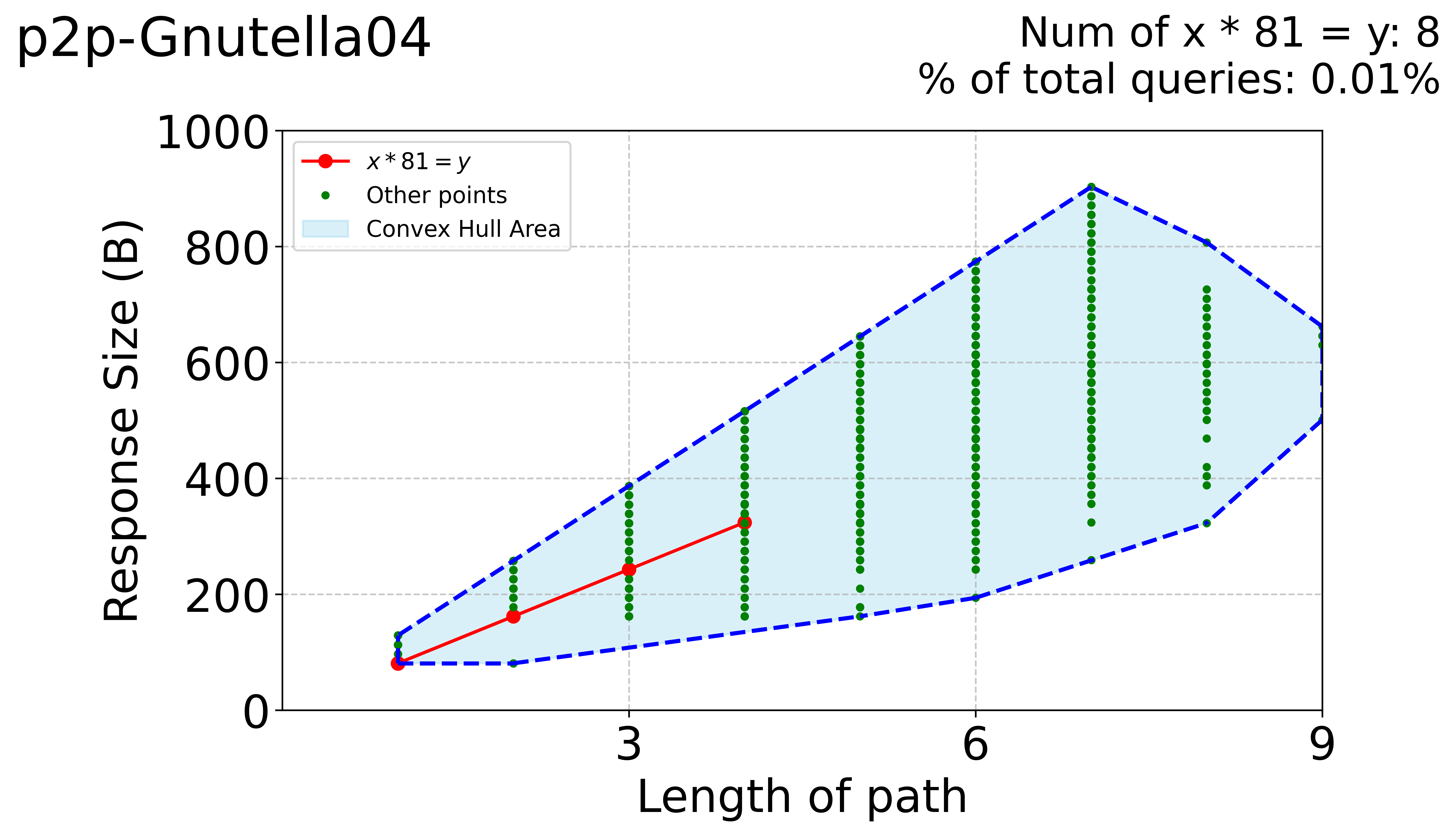}
      \caption{}
  \end{subfigure}
  \caption{Relationship between shortest path length and response size in BlindGES. The dots in the shaded area represent possible response values, and points on the line $S = 81 \cdot L$ indicate queries whose path lengths can be directly inferred.}
  \label{Fig resp of path in BlindGES}
\end{figure}

\section{Limitations and Future Work}
While the Fragment Tree attack demonstrates the effectiveness of exploiting structural leakage in PathGES, and BlindGES provides a robust defense mechanism, both approaches have inherent limitations that point to directions for future research.

\textbf{Limitations of the Fragment Tree Attack.} The attack assumes the server has knowledge of the complete plaintext graph and can observe all query tokens and their relationships over time. In scenarios where the graph is not known to the adversary or where query observation is limited, the attack's effectiveness may be reduced. Additionally, the attack performs best on sparse graphs such as the Gnutella datasets, where approximately 10.24\% of queries can be uniquely recovered. On dense graphs like Ca-GrQc, which has a density of 0.995, the attack achieves only 0.048\% exact recovery because the high connectivity leads to highly homogeneous path names that are difficult to distinguish. The approximate query recovery capability, while narrowing queries to 2-5 candidates for about 10\% of queries on sparse graphs, still requires additional side information to achieve exact recovery in many cases.

\textbf{Limitations of BlindGES.} The Merge-and-Divide mechanism in BlindGES requires the client to predefine two division parameters, $l_1$ and $l_2$  which are set to 9 and 5 respectively in the experiments. These parameters may not be optimal for all graph types, and adaptive parameter selection based on graph density or structure could further improve performance. The setup phase of BlindGES, while significantly faster than PathGES on large graphs, still requires substantial computational resources for very large datasets—on p2p-Gnutella04, the setup time is approximately 73.9 minutes. Additionally, the security analysis assumes the underlying EMM-RR and EMM-RH schemes are secure, and any vulnerabilities in those primitives would affect BlindGES.

\textbf{Future Work Directions.} Several promising directions emerge from this analysis. First, developing adaptive parameter selection mechanisms for BlindGES that automatically tune $l_1$ and $l_2$ based on graph characteristics could improve both security and efficiency across diverse graph types. Second, exploring alternative graph decomposition methods beyond HLD may yield better structural obfuscation for social networks where heavy edges are naturally scarce. Third, integrating differential privacy techniques with graph encryption could provide formal guarantees against side-channel attacks like the response-size inference. Fourth, extending both the Fragment Tree attack and BlindGES to support dynamic graph updates—where vertices and edges are added or removed over time—remains an open challenge, as the current schemes assume static graphs. Fifth, investigating whether the isomorphism-based attack methodology can be applied to other structured encryption schemes beyond PathGES would help understand the broader implications of structural leakage. Finally, developing stronger leakage suppression techniques that go beyond the two-level index structure could further reduce the proportion of one-to-one mappings, ideally approaching zero while maintaining query efficiency

\section{Conclusion}
This report presents a comprehensive analysis of structural leakage in graph encryption schemes for shortest path queries, integrating findings from two complementary works.

PathGES Analysis: We identify that PathGES, despite design intentions, maintains over 99\% one-to-one token-path mappings on real-world datasets due to HLD imbalances. This enables both the Falzon-Paterson attack and side-channel inference of path lengths (84\%+ of queries).

Fragment Tree Attack: We present a novel attack methodology exploiting PathGES structural leakage. The attack constructs fragment trees from HLD decomposition and query trees from leaked tokens, proves their isomorphism, and recovers query contents. Experimental results show up to 10.24\% exact query recovery on sparse graphs and candidate reduction to 2-5 possibilities for additional queries.

BlindGES Defense: We present an enhanced scheme incorporating Merge-and-Divide mechanism and two-level multimap index. BlindGES reduces one-to-one mappings to below 20\%, cuts setup time by 50\%, reduces storage overhead by 32\%, and limits path length leakage to under 1\% while maintaining millisecond-level query response times.

Together, these works demonstrate that structural leakage remains a critical vulnerability in graph encryption, that practical attacks can effectively exploit this leakage, and that careful design with mechanisms like Merge-and-Divide can substantially improve security without sacrificing efficiency.

\bibliographystyle{plain}
\bibliography{reference-SLGE}

\end{document}